\title{Backfitting for large scale crossed random effects
  regressions}
\author{Swarnadip Ghosh\\Stanford University
  \and Trevor Hastie\\Stanford University
\and Art B. Owen \\Stanford University}
\date{February 2021}
\newtheorem{lemma}{Lemma}
\newtheorem{theorem}{Theorem}
\newcommand{\myemph}[1]{\textsl{\textbf{#1}}}
\newcommand{\bone}{\mathbf{1}}
\newcommand{\tran}{\mathsf{T}}
\newcommand{\wh}{\widehat}
\newcommand{\wt}{\widetilde}
\newcommand{\diag}{\mathrm{diag}}
\renewcommand{\ge}{\geqslant}
\renewcommand{\le}{\leqslant}
\newcommand{\dnorm}{\mathcal{N}}
\newcommand{\dunif}{\mathbb{U}}
\newcommand{\sumdot}{\text{\tiny$\bullet$}}
\newcommand{\e}{\mathbb{E}}
\newcommand{\zij}{Z_{ij}}
\newcommand{\zijp}{Z_{ij'}}
\newcommand{\zis}{Z_{is}}
\newcommand{\zipj}{Z_{i'j}}
\newcommand{\zipjp}{Z_{i'j'}}
\newcommand{\zrj}{Z_{rj}}
\newcommand{\zrs}{Z_{rs}}
\newcommand{\zzt}{(ZZ^\tran)}
\newcommand{\ztz}{(Z^\tran Z)}
\newcommand{\nid}{N_{i\sumdot}}
\newcommand{\ndj}{N_{\sumdot j}}
\newcommand{\nrd}{N_{r\sumdot}}
\newcommand{\nds}{N_{\sumdot s}}
\newcommand{\ai}{a_i}
\newcommand{\bj}{b_{j}}
\newcommand{\bs}{b_s}
\newcommand{\eij}{e_{ij}}
\newcommand{\ssa}{\sigma^2_A}
\newcommand{\ssb}{\sigma^2_B}
\newcommand{\sse}{\sigma^2_E}
\newcommand{\ssah}{\hat\sigma^2_A}
\newcommand{\ssbh}{\hat\sigma^2_B}
\newcommand{\sseh}{\hat\sigma^2_E}
\newcommand{\yij}{Y_{ij}}
\newcommand{\yrj}{Y_{rj}}
\newcommand{\yis}{Y_{is}}
\newcommand{\yrs}{Y_{rs}}
\newcommand{\xij}{x_{ij}}
\newcommand{\var}{\mathrm{var}}    
\newcommand{\cov}{\mathrm{cov}}   
\newcommand{\simiid}{\stackrel{\mathrm{iid}}{\sim}}
\newcommand{\simind}{\stackrel{\mathrm{ind}}{\sim}}
\newcommand{\real}{\mathbb{R}}
\newcommand{\phe}{\phantom{=}\,\,} 
\newcommand{\phz}{\phantom{0}}
\newcommand{\ols}{\mathrm{OLS}}
\newcommand{\gls}{\mathrm{GLS}}
\newcommand{\bsa}{\boldsymbol{a}}
\newcommand{\bsb}{\boldsymbol{b}}
\newcommand{\bsg}{\boldsymbol{g}}
\newcommand{\bsu}{\boldsymbol{u}}
\newcommand{\bsx}{\boldsymbol{x}}
\newcommand{\bszero}{\boldsymbol{0}}
\newcommand{\bgamma}{\boldsymbol{\gamma}}
\newcommand{\dbern}{\mathrm{Bern}}
\newcommand{\dbin}{\mathrm{Bin}}
\newcommand{\pij}{p_{ij}}
\newcommand{\stocleq}{\preccurlyeq}
\newcommand{\pup}{\Upsilon}
\newcommand{\mzero}{M^{(0)}}
\newcommand{\mone}{M^{(1)}}
\newcommand{\mtwo}{M^{(2)}}
\newcommand{\mthree}{M^{(3)}}
\newcommand{\cD}{\mathcal{D}}
\newcommand{\cf}{f}
\newcommand{\cS}{\mathcal{S}}
\newcommand{\cR}{\mathcal{R}}
\newcommand{\cG}{\mathcal{G}}
\newcommand{\cu}{\mathcal{U}}
\newcommand{\cv}{\mathcal{V}}
\newcommand{\cw}{\mathcal{W}}
\newcommand{\cx}{\mathcal{X}}
\newcommand{\cy}{\mathcal{Y}}
\newcommand{\cz}{\mathcal{Z}}
\newcommand{\bse}{\boldsymbol{e}}
\newcommand{\dom}{\mathcal{D}}
\newcommand{\rhoaux}{\rho_{\mathrm{aux}}}
\newcommand{\rhoprz}{\rho_{\mathrm{PRZ}}}
\begin{document}
\maketitle
\begin{abstract}
Regression models with crossed random effect errors can be very expensive to compute.
The cost of both generalized least squares and Gibbs sampling
can easily grow as $N^{3/2}$ (or worse) for $N$ observations.
\cite{papa:robe:zane:2020} present a collapsed Gibbs sampler that costs
$O(N)$, but under an extremely stringent sampling model.
We propose a backfitting algorithm to compute a generalized least squares
estimate and prove that it costs $O(N)$.
A critical part of the proof is in ensuring that the number
of iterations required is $O(1)$ which follows from keeping
a certain matrix norm below $1-\delta$ for some $\delta>0$.
Our conditions are greatly relaxed compared to those for the collapsed Gibbs
sampler, though still strict.
Empirically, the backfitting algorithm
has a norm below $1-\delta$ under conditions that are less strict
than those in our assumptions.
We illustrate the new algorithm on a ratings data set from Stitch Fix.
\end{abstract}

\section{Introduction}
To estimate a regression when the
errors have a non-identity covariance matrix, we
usually turn first to generalized least squares (GLS).  Somewhat
surprisingly, GLS proves to be computationally challenging
in the very simple setting of the unbalanced crossed random
effects models that we study here.
For that problem, the cost to compute the GLS estimate on $N$
data points grows at best like $O(N^{3/2})$ under the usual algorithms.
If we additionally assume Gaussian errors, then
\cite{crelin} show that even evaluating
the likelihood one time costs at least a multiple of $N^{3/2}$.
These costs make the usual algorithms for GLS
infeasible for large data sets such as
those arising in electronic commerce.

In this paper, we present an iterative algorithm based
on a backfitting approach from \cite{buja:hast:tibs:1989}.
This algorithm is known to converge to the
GLS solution.  The cost of each iteration is $O(N)$
and so we also study how the number of iterations grows
with~$N$.

The crossed random effects model we consider has
\begin{equation}\label{eq:refmodel}
\yij =\xij^\tran\beta+a_i+b_j+e_{ij},\quad 1\le i\le R,\quad
1\le j\le C
\end{equation}
for random effects $\ai$ and $\bj$ and an error $\eij$
with a fixed effects regression parameter $\beta\in\real^p$ for the covariates $\xij\in\real^p$.
We assume that
$a_i\simiid (0,\ssa)$, $b_j\simiid(0,\ssb)$, and $\eij\simiid(0,\sse)$
are all independent.  It is thus a mixed effects model in which the
random portion has a crossed structure.
The GLS estimate is also the maximum likelihood
estimate (MLE), when $\ai$, $\bj$ and $\eij$ are Gaussian.
Because we assume that $p$ is fixed as $N$ grows, we often
leave $p$ out of our cost estimates, giving instead the complexity in $N$.

The GLS estimate $\hat\beta_\gls$ for crossed random effects can be efficiently
estimated if all $R\times C$ values are available.
Our motivating examples involve ratings data where $R$ people
rate $C$ items and then it is usual that the data are
very unbalanced with a haphazard observational pattern
in which only $N\ll R\times C$ of the $(\xij,\yij)$ pairs are observed.
The crossed random effects setting is significantly more difficult
than a hierarchical model with just $\ai+\eij$ but no $\bj$
term.  Then the observations for index $j$ are `nested within' those
for each level of index $i$. The result is that the covariance matrix
of all observed $\yij$ values has a block diagonal structure
allowing GLS to be computed in $O(N)$ time.

Hierarchical models are very well suited to Bayesian
computation \citep{gelm:hill:2006}.
Crossed random effects are a much greater challenge.
\cite{GO17} find that the Gibbs sampler can take $O(N^{1/2})$
iterations to converge to stationarity, with each iteration
costing $O(N)$ leading once again to $O(N^{3/2})$ cost.
For more examples where 
the costs of solving equations versus sampling from a
covariance attain the same rate see
\cite{good:soka:1989} and \cite{RS97}.
As further evidence of the difficulty of this problem,
the Gibbs sampler was one of nine MCMC algorithms that
\cite{GO17} found to be unsatisfactory.
Furthermore, \cite{lme4} removed the {\tt mcmcsamp} function from the R package lme4
because it was considered unreliable even for the problem
of sampling the posterior distribution of the parameters
from previously fitted models, and even for those with random effects variances
near zero.

\cite{papa:robe:zane:2020} present
an exception to the high cost of a Bayesian approach
for crossed random effects.  They propose a collapsed Gibbs
sampler that can potentially mix in $O(1)$ iterations.
To prove this rate, they make an extremely stringent
assumption that every index $i=1,\dots,R$ appears in the
same number $N/C$ of observed data points and similarly
every $j=1,\dots,C$ appears in $N/R$ data points.
Such a condition is tantamount to requiring a designed
experiment for the data and it is much stronger than
what their algorithm seems to need in practice.
Under that condition their mixing rate asymptotes
to a quantity $\rhoaux$, described in our discussion section, 
that in favorable circumstances is $O(1)$.
They find empirically that their sampler has a cost that
scales well in many data sets where their balance condition
does not hold.

In this paper we study an iterative linear operation,
known as backfitting, for GLS.
Each iteration costs $O(N)$.
The speed of convergence depends on a certain
matrix norm of that iteration, which we exhibit below.
If the norm remains bounded strictly below $1$
as $N\to\infty$, then
the number of iterations to convergence is $O(1)$.
We are able to show that the matrix norm is $O(1)$
with probability tending to one, under conditions where
the number of observations per row (or per column) is random
and even the expected row or column counts may vary,
though in a narrow range.
While this is a substantial weakening of the conditions in
\cite{papa:robe:zane:2020}, it still fails to cover many
interesting cases.  Like them, we find empirically that our
algorithm scales much more broadly than under the
conditions for which scaling is proved.

We suspect that the computational infeasibility of GLS leads
many users to  use ordinary least squares (OLS) instead.
OLS has two severe problems.
First, it is \myemph{inefficient} with
$\var(\hat\beta_\ols)$ larger than $\var(\hat\beta_\gls)$.
This is equivalent to OLS ignoring some possibly large
fraction of the information in the data.
Perhaps more seriously, OLS is \myemph{naive}.
It produces an estimate of $\var(\hat\beta_\ols)$ that
can be too small by a large factor.  That amounts
to overestimating the quantity of information behind $\hat\beta_\ols$,
also by a potentially large factor.

The naivete of OLS can be countered by using better variance estimates.
One can bootstrap it by resampling the row and column entities as in \cite{pbs}.
There is also a version of Huber-White variance estimation
for this case in econometrics. See for instance \cite{came:gelb:mill:2011}.
While these methods counter the naivete of OLS, the inefficiency of OLS remains.

The method of moments algorithm in \cite{crelin}
gets consistent asymptotically normal estimates
of $\beta$, $\ssa$, $\ssb$ and $\sse$.
It produces a GLS estimate $\hat\beta$ that is more
efficient than OLS but still not fully efficient
because it accounts for correlations due to only one of the
two crossed random effects.  While inefficient, it is not naive
because its estimate of $\var(\hat\beta)$
properly accounts for variance due to $\ai$, $\bj$ and $\eij$.

In this paper we get a GLS estimate $\hat\beta$
that takes account of all three variance components,
making it efficient.
We also provide an estimate of $\var(\hat\beta)$ that accounts
for all three components, so our estimate is not naive.
Our algorithm requires consistent estimates of the variance components
$\ssa$, $\ssb$ and $\sse$ in computing $\hat\beta$ and $\wh\var(\hat\beta)$.
We use the method of moments estimators from \cite{GO17} that can
be computed in $O(N)$ work.
By \citet[Theorem 4.2]{GO17}, these estimates of $\ssa$, $\ssb$ and
$\sse$ are asymptotically uncorrelated and each of them has the same
asymptotic variance it would have had were the other two variance components equal to zero.
It is not known whether they are optimally estimated, much less optimal
subject to an $O(N)$ cost constraint.
The variance component estimates are known to be
asymptotically normal \citep{gao:thesis}.

The rest of this paper is organized as follows.
Section~\ref{sec:missing} introduces our notation and assumptions
for missing data.
Section~\ref{sec:backfitting} presents the backfitting algorithm
from \cite{buja:hast:tibs:1989}.  That algorithm was defined for
smoothers, but we are able to cast the estimation of random effect
parameters as a special kind of smoother.
Section~\ref{sec:normconvergence} proves our result about
backfitting being convergent with a probability tending to one
as the problem size increases.
Section~\ref{sec:empiricalnorms} shows numerical measures
of the matrix norm of the backfitting operator.  It remains
bounded below and away from one under more conditions than our theory shows.
We find that even one iteration of
the lmer function in lme4 package \cite{lme4} has a cost that grows like $N^{3/2}$
in one setting and like $N^{2.1}$ in another, sparser one.
The backfitting algorithm has cost $O(N)$ in both of these cases.
Section~\ref{sec:stitch} illustrates our GLS algorithm
on some data provided to us by Stitch Fix.  These are customer
ratings of items of clothing on a ten point scale.
Section~\ref{sec:discussion} has a discussion of these results.
An appendix contains some regression output for the
Stitch Fix data.

\section{Missingness}\label{sec:missing}

We adopt the notation from \cite{crelin}.
We let $\zij\in\{0,1\}$ take the value $1$
if $(\xij,\yij)$ is observed  and $0$ otherwise,
for $i=1,\dots,R$ and $j=1,\dots,C$.
In many of the contexts we consider, the missingness
is not at random and is potentially informative.
Handling such problems is outside the scope of
this paper, apart from a brief discussion in Section~\ref{sec:discussion}.
It is already a sufficient challenge to work without
informative missingness.

The matrix $Z\in\{0,1\}^{R\times C}$, with elements $\zij$
has $\nid =\sum_{j=1}^C\zij$ observations
in `row $i$' and $\ndj=\sum_{i=1}^R\zij$ observations
in `column $j$'.
We often drop the limits of summation so that $i$
is always summed over $1,\dots,R$ and $j$ over $1,\dots,C$.
When we need additional symbols for row and column indices we
use $r$ for rows and $s$ for columns.
The total sample size is $N=\sum_i\sum_j\zij
=\sum_i\nid = \sum_j\ndj$.

There are two co-observation matrices, $Z^\tran Z$ and $ZZ^\tran$.
Here $\ztz_{js}=\sum_i\zij\zis$ gives the number of rows in which
data from both columns $j$ and $s$ were observed,
while $\zzt_{ir}=\sum_j\zij\zrj$ gives the number of
columns in which data from both rows $i$ and $r$ were observed.

In our regression models, we treat $\zij$ as nonrandom. We are conditioning
on the actual pattern of observations in our data.
When we study the rate at which our backfitting algorithm converges, we
consider $\zij$ drawn at random. That is, the analyst is solving a GLS
conditionally on the pattern of observations and missingness, while
we study the convergence rates that analyst will see for data
drawn from a missingness mechanism defined in Section~\ref{sec:modelz}.

If we place all of the $\yij$ into a vector $\cy\in\real^N$ and $\xij$
compatibly into a matrix $\cx\in\real^{N\times p}$, then
the naive and inefficient OLS estimator is
\begin{align}\label{eq:bhatols}
\hat\beta_\ols = (\cx^\tran \cx)^{-1}\cx^\tran\cy.
\end{align}
This can be computed in $O(Np^2)$ work.  We prefer to use
the GLS estimator
\begin{align}\label{eq:bhatgls}\hat\beta_\gls = (\cx^\tran \cv^{-1}\cx)^{-1}\cx^\tran\cv^{-1}\cy,
\end{align}
where $\cv\in\real^{N\times N}$ contains all of the $\cov(\yij,\yrs)$ in
an ordering compatible with $\cx$ and $\cy$.  A naive algorithm costs $O(N^3)$
to solve for $\hat\beta_\gls$.
It can actually be solved through a Cholesky decomposition of an $(R+C)\times (R+C)$ matrix
\citep{sear:case:mccu:1992}.
That has cost $O(R^3+C^3)$.
Now $N\le RC$, with equality only for completely observed data.
Therefore  $\max(R,C)\ge \sqrt{N}$, and so $R^3+C^3\ge N^{3/2}$.
When the data are sparsely enough observed it is possible
that $\min(R,C)$ grows more rapidly than $N^{1/2}$.
In a numerical example in Section~\ref{sec:empiricalnorms} we have $\min(R,C)$
growing like $N^{0.70}$.
In a hierarchical model, with $\ai$ but no $\bj$ we would find
$\cv$ to be block diagonal and then
$\hat\beta_\gls$ could be computed in $O(N)$ work.

A reviewer reminds us that it has been known since \cite{stra:1969} that
systems of equations can be solved more quickly than cubic time.
Despite that, current software is still dominated by cubic time algorithms.
Also none of the known solutions are quadratic
and so in our setting the cost would be at least a multiple
of $(R+C)^{2+\gamma}$ for some $\gamma>0$ and hence not $O(N)$.

We can write our crossed effects model as
\begin{align}\label{eq:cemodelviaz}
\cy = \cx\beta + \cz_A\bsa + \cz_B\bsb + \bse
\end{align}
for matrices $\cz_A\in\{0,1\}^{N\times R}$ and $\cz_B\in\{0,1\}^{N\times C}$.
The $i$'th column of $\cz_A$ has ones for all of the $N$ observations that
come from row $i$ and zeroes elsewhere. The definition of $\cz_B$ is analogous.
The observation matrix can be written $Z = \cz_A^\tran\cz_B$.
The vector $\bse$ has all $N$ values of $\eij$ in compatible order.
Vectors $\bsa$ and $\bsb$ contain the row and column random effects
$\ai$ and $\bj$.
In this notation
\begin{equation}
  \label{eq:Vee}
  \cv = \cz_A\cz_A^\tran\ssa + \cz_B\cz_B^\tran\ssb + I_N\sse,
\end{equation}
where $I_N$ is the $N \times N$ identity matrix.

Our main computational problem is to get
a value for $\cu=\cv^{-1}\cx\in\real^{N\times p}$.
To do that we iterate towards a solution $\bsu\in\real^N$ of  $\cv \bsu=\bsx$,
where $\bsx\in\real^N$ is one of the $p$ columns of $\cx$.
After that, finding
\begin{equation}
  \label{eq:betahat}
\hat\beta_\gls = (\cx^\tran \cu)^{-1}(\cy^\tran\cu)^\tran
\end{equation}
is not expensive, because $\cx^\tran\cu\in\real^{p\times p}$ and we suppose that $p$ is not large.

If the data ordering in $\cy$ and elsewhere sorts by index $i$, breaking ties by index $j$,
then $\cz_A\cz_A^\tran\in\{0,1\}^{N\times N}$ is
a block matrix with $R$ blocks of ones
of size $\nid\times\nid$ along the diagonal and zeroes elsewhere.
The matrix $\cz_B\cz_B^\tran$ will not be block diagonal in that ordering.
Instead $P\cz_B\cz_B^\tran P^\tran$ will be block diagonal with
$\ndj\times\ndj$ blocks of ones on the diagonal,
for a suitable $N\times N$ permutation matrix $P$.

\section{Backfitting algorithms}\label{sec:backfitting}
Our first goal is to develop computationally efficient ways to
solve the GLS problem \eqref{eq:betahat} for the linear mixed model~\eqref{eq:cemodelviaz}.
We use the  backfitting algorithm that
\cite{hast:tibs:1990} and \cite{buja:hast:tibs:1989}
use to fit additive models.
We write $\cv$ in (\ref{eq:Vee}) as $\sse\left(\cz_A\cz_A^\tran/\lambda_A+\cz_B\cz_B^\tran/\lambda_B
  +I_N\right)$ with $\lambda_A=\sse/\ssa$ and
$\lambda_B=\sse/\ssb$,
and define $\cw=\sse\cv^{-1}$.  
Then the GLS estimate of $\beta$ is
\begin{align}
  \hat\beta_{\gls}&=\arg\min_\beta (\cy-\cx\beta)^\tran\cw(\cy-\cx\beta)
=     (\cx^\tran\cw\cx)^{-1}\cx^\tran\cw\cy\label{eq:betahatw}
\end{align}
and $\cov(\hat\beta_{\gls})=\sse (\cx^\tran\cw\cx)^{-1}$.

It is well known (e.g., \cite{robinson91:_that_blup}) that we can obtain
$\hat\beta_{\gls}$ by solving the
following penalized least-squares problem
\begin{align}\label{eq:minboth}
\min_{\beta,\bsa,\bsb}\Vert \cy-\cx\beta-\cz_A\bsa-\cz_B\bsb\Vert^2
+\lambda_A\Vert\bsa\Vert^2 +\lambda_B\Vert\bsb\Vert^2.
\end{align}
Then $\hat\beta=\hat\beta_{\gls}$ and $\hat \bsa$ and $\hat \bsb$ are the
best linear unbiased prediction (BLUP) estimates
of the random effects.
This derivation works for any number of factors, but it is
instructive to carry it through initially for one.


\subsection{One factor}\label{sec:one-factor}

For a single factor,
we simply drop the $\cz_B\bsb$ term from \eqref{eq:cemodelviaz} to get
\begin{equation*}
  \cy = \cx\beta + \cz_A\bsa +\bse.
\end{equation*}
Then
$\cv=\cov(\cz_A\bsa+\bse)= \ssa\cz_A\cz_A^\tran +\sse I_N$, and $\cw=\sse\cv^{-1}$ as before.
The penalized least squares problem is to solve
\begin{align}\label{eq:equivmina}
\min_{\beta,\bsa} \Vert \cy - \cx\beta -\cz_A\bsa\Vert^2 + \lambda_A \Vert\bsa\Vert^2.
\end{align}
We show the details as we need them for a later derivation.

The normal equations from~\eqref{eq:equivmina} yield
\begin{align}
\bszero & = \cx^\tran(\cy-\cx\hat\beta-\cz_A\hat\bsa),\quad\text{and}\label{eq:normbeta}\\
\bszero & = \cz_A^\tran(\cy-\cx\hat\beta-\cz_A\hat\bsa)
    -\lambda_A\hat\bsa.\label{eq:normbsa}
\end{align}
Solving~\eqref{eq:normbsa} for $\hat\bsa$ and multiplying the solution by $\cz_A$ yields
$$
\cz_A\hat\bsa = \cz_A(\cz_A^\tran \cz_A + \lambda_AI_R)^{-1}\cz_A^\tran(\cy-\cx\hat\beta)
\equiv \cS_A(\cy-\cx\hat\beta),
$$
for an $N\times N$ ridge regression ``smoother matrix'' $\cS_A$.
As we explain below this smoother matrix implements shrunken within-group means.
Then substituting $\cz_A\hat\bsa$ into equation~\eqref{eq:normbeta}    
yields
\begin{equation}
  \label{eq:onefactor}
\hat\beta = (\cx^\tran(I_N-\cS_A)\cx)^{-1}\cx^\tran(I_N-\cS_A)\cy.
\end{equation}
Using the Sherman-Morrison-Woodbury (SMW) identity, one can show that $\cw=I_N-\cS_A$ and
hence $\hat\beta$ above equals $\hat\beta_\gls$
from~\eqref{eq:betahatw}. This is not in itself a new discovery; see
for example \cite{robinson91:_that_blup} or \cite{hast:tibs:1990}
(Section 5.3.3).

To compute the solution in (\ref{eq:onefactor}), we need to compute
$\cS_A \cy$  and $\cS_A\cx$. The heart of the computation in
$\cS_A \cy$
is $(\cz_A^\tran \cz_A + \lambda_AI_R)^{-1}\cz_A^\tran\cy$.
But $\cz_A^\tran
\cz_A=\diag(N_{1\sumdot},N_{2\sumdot},\ldots,N_{R\sumdot})$ and we
see that all we are doing is computing an $R$-vector of shrunken means of the elements
of $\cy$ at each level of the factor $A$; the $i$th element is $\sum_j\zij Y_{ij}/(N_{i\sumdot}+\lambda_A)$.
This involves a single pass through the $N$ elements of $Y$,
accumulating the sums into $R$ registers, followed by an elementwise
scaling of the $R$ components.  Then pre-multiplication by $\cz_A$ simply puts these
$R$ shrunken means back into an
$N$-vector in the appropriate positions. The total cost is $O(N)$.
Likewise $\cS_A\cx$ does the
same separately for each of the columns of $\cx$.
Hence the entire computational cost for \eqref{eq:onefactor} is $O(Np^2)$, the same order as regression on $\cx$.

What is also clear is that the indicator matrix
$\cz_A$ is not actually needed here; instead all we need to carry out
these computations is the
factor vector $\cf_A$ that records the level of factor $A$ for each
of the $N$ observations. In the R language \citep{R:lang:2015} the following pair of operations does
the computation:
\begin{verbatim}
  hat_a = tapply(y,fA,sum)/(table(fA)+lambdaA)
  hat_y = hat_a[fA]
\end{verbatim}
where {\tt fA} is a categorical variable (factor) $f_A$ of length $N$ containing the row indices $i$ in an order compatible with $Y\in\real^N$ (represented as {\tt y})
and {\tt lambdaA} is $\lambda_A=\ssa/\sse$.

\subsection{Two factors}\label{sec:two-factors}
With two factors we face the problem of incompatible block diagonal
matrices discussed in Section~\ref{sec:missing}.
Define $\cz_G=(\cz_A\!:\!\cz_B)$ ($R+C$ columns),
$\cD_\lambda=\diag(\lambda_AI_R,\lambda_BI_C)$,
and $\bsg^\tran=(\bsa^\tran,\bsb^\tran)$.
Then solving \eqref{eq:minboth} is equivalent to
\begin{align}\label{eq:ming}
\min_{\beta,\bsg}\Vert \cy-\cx\beta-\cz_G\bsg\Vert^2
+\bsg^\tran\cD_\lambda\bsg.
\end{align}

A derivation similar to that used in the one-factor case gives
\begin{equation}
  \label{eq:gfactor}
\hat\beta =
H_\gls\cy\quad\text{for}\quad
H_\gls = (\cx^\tran(I_N-\cS_G)\cx)^{-1}\cx^\tran(I_N-\cS_G),
\end{equation}
where the hat matrix $H_\gls$ is written in terms of
a smoother matrix
\begin{equation}
  \label{eq:defcsg}
  \cS_G=\cz_G(\cz_G^\tran \cz_G + \cD_\lambda)^{-1}\cz_G^\tran.
\end{equation}
We can again use SMW to show that $I_N-\cS_G=\cw$ and hence the
solution $\hat\beta=\hat\beta_{\gls}$ in \eqref{eq:betahatw}.
But in applying $\cS_G$ we do not enjoy the computational
simplifications that occurred in the one factor case, because
\begin{equation*}
  \cz_G^\tran\cz_G=
  \left(
    \begin{array}{cc}
    \cz_A^\tran\cz_A&\cz_A^\tran\cz_B\\[0.25ex]
      \cz_B^\tran\cz_A&\cz_B^\tran\cz_B
    \end{array}
  \right)
=\begin{pmatrix} \diag(\nid) & Z\\
Z^\tran & \diag(\ndj)
\end{pmatrix},
\end{equation*}
where $Z\in\{0,1\}^{R\times C}$ is the observation matrix
which has no special structure.
Therefore we need to invert an $(R+C)\times (R+C)$ matrix to apply
$\cS_G$ and hence to solve
\eqref{eq:gfactor}, at a cost of at least $O(N^{3/2})$ (see Section~\ref{sec:missing}).

Rather than group $\cz_A$ and $\cz_B$, we keep them separate, and
develop an algorithm to apply the operator $\cS_G$ efficiently.
Consider a generic response vector $\cR$ (such as $\cy$ or a column of $\cx$) and the optimization problem
\begin{align}\label{eq:minab}
\min_{\bsa,\bsb}\Vert \cR-\cz_A\bsa-\cz_B\bsb\Vert^2
+\lambda_A\|\bsa\|^2+\lambda_B\|\bsb\|^2.
\end{align}

Using  $\cS_G$ defined at~\eqref{eq:defcsg}
in terms of the indicator variables $\cz_G\in\{0,1\}^{N\times (R+C)}$
it is clear that the fitted values are given by
  $\widehat\cR=\cS_G\cR$.
Solving (\ref{eq:minab}) would result in two blocks of estimating
equations similar to equations \eqref{eq:normbeta} and \eqref{eq:normbsa}.
These can be written
\begin{align}\label{eq:backfit}
\begin{split}
\cz_A\hat\bsa & = \cS_A(\cR-\cz_B\hat\bsb),\quad\text{and}\\
\cz_B\hat\bsb & = \cS_B(\cR-\cz_A\hat\bsa),
\end{split}
\end{align}
where
$\cS_A=\cz_A(\cz_A^\tran\cz_A + \lambda_AI_R)^{-1}\cz_A^\tran$ is again
the ridge regression smoothing matrix for row effects and similarly
$\cS_B=\cz_B(\cz_B^\tran\cz_B + \lambda_BI_C)^{-1}\cz_B^\tran$ the
smoothing matrix for column effects.
We solve these equations iteratively by block coordinate descent,
also known as backfitting.
The iterations converge to the solution
of~\eqref{eq:minab} \citep{buja:hast:tibs:1989, hast:tibs:1990}.

It is evident that $\cS_A,\cS_B\in\real^{N\times N}$
are both symmetric matrices.  It follows that the limiting smoother
$\cS_G$ formed by combining them is also symmetric. See \citet[page 120]{hast:tibs:1990}.
We will need this result later for an important computational shortcut.

Here the simplifications we enjoyed in the one-factor case once again
apply. Each step applies its operator to a vector
(the terms  in parentheses on the right hand side in
(\ref{eq:backfit})). For both $\cS_A$ and $\cS_B$ these are
simply the shrunken-mean operations described for the one-factor case,
separately for factor $A$ and $B$ each time. As before, we do not need to
actually construct $\cz_B$, but simply use a factor $\cf_B$
that records the level of factor $B$ for each of the $N$ observations.

The above description holds for a generic response $\cR$; we apply that algorithm (in
parallel) to $\cy$ and each column of $\cx$ to obtain
the quantities $\cS_G\cx$ and $\cS_G\cy$
that we need to compute $H_{\gls}\cy$ in \eqref{eq:gfactor}.
Now solving (\ref{eq:gfactor}) is $O(Np^2)$ plus a negligible $O(p^3)$ cost.
These computations deliver $\hat\beta_{\gls}$; if the BLUP
estimates $\hat\bsa$ and $\hat{\bsb}$ are also required, the same algorithm
can be applied to the response $\cy-\cx\hat\beta_{\gls}$, retaining the $\bsa$ and
$\bsb$ at the final iteration.
We can also write
\begin{equation}\label{eq:covbhat}
\cov(\hat\beta_{\gls})=\sse(\cx^\tran(I_N-\cS_G)\cx)^{-1}.
\end{equation}
It is also clear that we can trivially extend this approach to
accommodate any number of factors.

\subsection{Centered operators}
\label{sec:centered-operators}
The matrices $\cz_A$ and $\cz_B$ both have row sums all ones, since
they are factor indicator matrices (``one-hot encoders''). This
creates a nontrivial intersection between their column spaces, and
that of $\cx$ since we always include an intercept, that can
cause backfitting to converge more slowly. In this section we show
how to counter this intersection of column spaces
to speed convergence.
We work with this two-factor model
\begin{align}\label{eq:equivmina1}
\min_{\beta,\bsa,\bsb} \Vert \cy - \cx\beta -\cz_A\bsa-\cz_B\bsb\Vert^2 + \lambda_A \Vert\bsa\Vert^2+\lambda_B\Vert\bsb\Vert^2.
\end{align}
\begin{lemma}
  If $\cx$ in model~\eqref{eq:equivmina1}
includes a column of ones (intercept), and $\lambda_A>0$
  and $\lambda_B>0$, then the solutions for $\bsa$ and $\bsb$ satisfy
  $\sum_{i=1}^R a_i=0$ and $\sum_{j=1}^C b_j=0$.
\end{lemma}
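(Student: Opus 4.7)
My plan is to use the first-order optimality conditions for the penalized least-squares problem~\eqref{eq:equivmina1} and exploit the fact that the column space of $\cx$ contains $\bone_N$ and that both $\cz_A$ and $\cz_B$ have row sums equal to one (so $\cz_A\bone_R = \cz_B\bone_C = \bone_N$). This is really the statement that a constant shift can be transferred between the intercept and the random-effect vector, and the $\ell_2$ penalty prefers the version with mean zero; I will convert this intuition into an algebraic identity from the stationarity equations.

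First I would write the three normal equations. Setting the gradient in $\beta$ to zero gives $\cx^\tran \bsr = \bszero$, where $\bsr = \cy - \cx\hat\beta - \cz_A\hat\bsa - \cz_B\hat\bsb$ is the residual. The $\bsa$-stationarity equation is $\cz_A^\tran \bsr = \lambda_A \hat\bsa$ and the $\bsb$-stationarity equation is $\cz_B^\tran \bsr = \lambda_B \hat\bsb$ (exactly as in~\eqref{eq:normbsa}, extended to two factors).

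Next, because $\cx$ contains a column of ones, $\cx^\tran \bsr = \bszero$ implies in particular that $\bone_N^\tran \bsr = 0$. I would then premultiply the $\bsa$-equation by $\bone_R^\tran$ and use the identity $\bone_R^\tran \cz_A^\tran = (\cz_A \bone_R)^\tran = \bone_N^\tran$ to obtain
\begin{equation*}
0 = \bone_N^\tran \bsr = \lambda_A \bone_R^\tran \hat\bsa = \lambda_A \sum_{i=1}^R \hat a_i.
\end{equation*}
Since $\lambda_A > 0$, this forces $\sum_{i=1}^R \hat a_i = 0$. Applying the same manipulation to the $\bsb$-equation with $\bone_C^\tran$ and using $\cz_B \bone_C = \bone_N$ gives $\sum_{j=1}^C \hat b_j = 0$.

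There is no real obstacle here; the only thing to keep in mind is that the two ingredients, the intercept-induced relation $\bone_N^\tran \bsr = 0$ and the one-hot property $\cz_A \bone_R = \cz_B \bone_C = \bone_N$, must both be available, and that $\lambda_A,\lambda_B > 0$ is needed to divide through. The $\lambda_A = 0$ case would leave $\hat a_i$ undetermined along the constant direction, consistent with the identifiability issue that the centering assumption is designed to resolve.
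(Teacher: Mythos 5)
Your proof is correct, but it takes a different route from the paper. You work from the first-order optimality conditions: the intercept column gives $\bone_N^\tran r=0$ for the residual $r=\cy-\cx\hat\beta-\cz_A\hat\bsa-\cz_B\hat\bsb$, and left-multiplying the stationarity equation $\cz_A^\tran r=\lambda_A\hat\bsa$ by $\bone_R^\tran$ together with $\cz_A\bone_R=\bone_N$ forces $\lambda_A\sum_i\hat a_i=0$. The paper instead argues variationally: it first reduces to one factor with $\cx=\bone$, then observes that the perturbation $(\beta+c,\{a_i-c\})$ leaves the loss unchanged while the penalty $\sum_i(a_i-c)^2$ is strictly minimized at $c=(1/R)\sum_i a_i$, so any optimum must already have mean zero. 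Both arguments hinge on the same two ingredients (the one-hot row sums and the presence of the intercept), but yours has the advantage of handling both factors simultaneously and avoiding the informal ``it suffices to consider one factor'' reduction, at the cost of writing out the normal equations; the paper's perturbation argument is more elementary and makes the mechanism (the penalty prefers the centered representative of an equivalence class of fits) more transparent. Your closing remark about the $\lambda_A=0$ case correctly identifies the identifiability degeneracy that the positivity assumption rules out.
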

\begin{proof}
It suffices to show this for one factor and with $\cx=\bone$. The
objective is now
\begin{align}\label{eq:equivsimp}
\min_{\beta,\bsa} \Vert \cy - \bone\beta -\cz_A\bsa\Vert^2 + \lambda_A \Vert\bsa\Vert^2.
\end{align}
Notice that for any candidate solution $(\beta,\{a_i\}_1^R)$, the alternative
solution $(\beta+c,\{a_i-c\}_1^R)$ leaves the loss part of
\eqref{eq:equivsimp} unchanged, since the row sums of $\cz_A$ are all
one. Hence if $\lambda_A>0$, we would always improve $\bsa$ by picking
 $c$ to minimize the
penalty term
$\sum_{i=1}^R(a_i-c)^2$, or $c=(1/R)\sum_{i=1}^Ra_i$.
\end{proof}

It is natural then to solve for $\bsa$ and $\bsb$ with these
constraints enforced, instead of waiting for them
to simply emerge in the process of iteration.

\begin{theorem}\label{thm:smartcenter}
  Consider the generic optimization problem
  \begin{align}\label{eq:equivsimp2}
\min_{\bsa} \Vert \cR  -\cz_A\bsa\Vert^2 + \lambda_A
    \Vert\bsa\Vert^2\quad \mbox{subject to } \sum_{i=1}^Ra_i=0.
  \end{align}
  Define the partial sum vector $\cR^+ = \cz_A^\tran\cR$
with components $\cR^+_{i} = \sum_j\zij\cR_{ij}$,
and let
    $$w_i=\frac{(N_{i\sumdot}+\lambda)^{-1}}{\sum_{r}(\nrd+\lambda)^{-1}}.$$
    Then the solution $\hat \bsa$ is given by
\begin{align}\label{eq:ahatsoln}
\hat
    a_i=\frac{\cR^+_{i}-\sum_{r}w_r\cR^+_{r}}{N_{i\sumdot}+\lambda_A},
\quad i=1,\ldots,R.
\end{align}
    Moreover, the fit is given by
    $$\cz_A\hat\bsa=\tilde\cS_A\cR,$$ where $\tilde \cS_A$ is a
    symmetric operator.
\end{theorem}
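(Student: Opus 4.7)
I would attack the problem via Lagrangian duality, since the constraint $\sum_i a_i = 0$ is a single linear equality. Form
\begin{equation*}
L(\bsa,\mu) = \Vert \cR - \cz_A\bsa\Vert^2 + \lambda_A\Vert\bsa\Vert^2 + 2\mu\,\bone^\tran\bsa,
\end{equation*}
and differentiate with respect to $\bsa$. Using $\cz_A^\tran\cz_A = \diag(N_{i\sumdot})$ and $\cz_A^\tran\cR = \cR^+$, the stationarity condition decouples coordinatewise as
\begin{equation*}
(N_{i\sumdot}+\lambda_A)\,a_i = \cR^+_i - \mu,\qquad i=1,\dots,R.
\end{equation*}
Thus $a_i = (\cR^+_i - \mu)/(N_{i\sumdot}+\lambda_A)$ for some scalar $\mu$ yet to be determined.

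Next I would impose the constraint $\sum_i a_i = 0$ to solve for $\mu$. Summing the expression above over $i$ and setting the sum to zero gives
\begin{equation*}
\mu \,=\, \frac{\sum_i \cR^+_i/(N_{i\sumdot}+\lambda_A)}{\sum_r 1/(N_{r\sumdot}+\lambda_A)} \,=\, \sum_{r} w_r\,\cR^+_r,
\end{equation*}
which is precisely the weighted mean appearing in the theorem statement. Substituting back yields \eqref{eq:ahatsoln}. A brief sanity check that the problem is a strictly convex quadratic (because $\lambda_A>0$) plus a linear equality constraint ensures the KKT solution is the unique global minimizer, so no other verification is needed.

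Finally, for symmetry of $\tilde\cS_A$, I would recast the formula in matrix form. Let $D = \diag(N_{i\sumdot}+\lambda_A)$. Writing the explicit solution as $\hat\bsa = M\,\cz_A^\tran\cR$, one reads off
\begin{equation*}
M \,=\, D^{-1} - \frac{D^{-1}\bone\bone^\tran D^{-1}}{\bone^\tran D^{-1}\bone},
\end{equation*}
since the weight $w_r$ is exactly the $r$-th entry of $D^{-1}\bone/(\bone^\tran D^{-1}\bone)$. Then
\begin{equation*}
\tilde\cS_A \,=\, \cz_A M \cz_A^\tran,
\end{equation*}
and because $M$ is a difference of symmetric matrices (a diagonal and a rank-one symmetric outer product), $M$ is symmetric, so $\tilde\cS_A$ is symmetric as well.

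I do not foresee a real obstacle: the calculation is elementary Lagrangian optimization followed by a sandwich-form rewriting. The only place one must be careful is in identifying the Lagrange multiplier $\mu$ with the weighted mean $\sum_r w_r \cR^+_r$ using the given normalization of $w_r$, and in recognizing that the rank-one correction term lives in the column space of $\cz_A$ so that the full operator $\tilde\cS_A$ inherits the sandwich structure needed for symmetry.
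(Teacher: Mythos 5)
Your proof is correct, and it reaches the stated formula and the symmetric sandwich form by a genuinely different route from the paper. You handle the linear constraint with a Lagrange multiplier: stationarity decouples coordinatewise because $\cz_A^\tran\cz_A=\diag(\nid)$, the multiplier $\mu$ is pinned down by the constraint and turns out to be exactly the weighted mean $\sum_r w_r\cR^+_r$, and strict convexity plus an affine feasible set makes the KKT point the unique global minimizer. The paper instead reparametrizes: it takes an orthogonal $M$ with first column $\bone/\sqrt R$, observes that the constraint becomes $\tilde\gamma_1=0$, drops that coordinate to get an unconstrained ridge regression $\hat\bgamma=(\cG^\tran\cG+\lambda_AI_{R-1})^{-1}\cG^\tran\cR$ with $\cG=\cz_AQ$, and then extracts the explicit formula by recognizing $(D+\lambda_AI_R)^{1/2}H(D+\lambda_AI_R)^{1/2}$ as the projection $I_R-\tilde q\tilde q^\tran/\Vert\tilde q\Vert^2$ with $\tilde q=(D+\lambda_AI_R)^{-1/2}\bone$. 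Both derivations produce the identical matrix
\begin{equation*}
H=(D+\lambda_AI_R)^{-1}-(D+\lambda_AI_R)^{-1}\frac{\bone\bone^\tran}{\bone^\tran(D+\lambda_AI_R)^{-1}\bone}(D+\lambda_AI_R)^{-1},
\end{equation*}
so the two arguments agree exactly. What the paper's route buys is that symmetry of $\tilde\cS_A$ is immediate from the standard smoother form $\cG(\cG^\tran\cG+\lambda_AI_{R-1})^{-1}\cG^\tran$ before any explicit computation, and it exhibits the constrained fit as an ordinary ridge fit in a reduced basis; what your route buys is a shorter, more elementary calculation that delivers the closed form \eqref{eq:ahatsoln} directly and then reads off symmetry from the diagonal-minus-rank-one structure of $H$. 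The only point worth making explicit in your write-up is that the denominator $\sum_r(\nrd+\lambda_A)^{-1}$ is nonzero (it is, being a sum of positive terms), so $\mu$ is well defined; otherwise nothing is missing.
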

The computations are a simple modification of the non-centered case.
\begin{proof}
  Let $M$ be an $R\times R$ orthogonal matrix with first column
  $\bone/\sqrt{R}$. Then $\cz_A\bsa=\cz_AMM^\tran\bsa=\tilde
  \cG\tilde\bgamma$ for $\cG=\cz_AM$ and
$\tilde\bgamma=M^\tran\bsa$.
Reparametrizing in this way leads to
  the equivalent problem
  \begin{align}\label{eq:equivsimp2}
\min_{\tilde\bgamma} \Vert \cR  -\tilde\cG\tilde\bgamma\Vert^2 + \lambda_A
    \Vert\tilde\bgamma\Vert^2,\quad \mbox{subject to } \tilde\gamma_1=0.
  \end{align}
To solve (\ref{eq:equivsimp2}), we simply drop the first column of
$\tilde \cG$. Let $\cG=\cz_AQ$ where $Q$ is the matrix $M$ omitting
the first column, and $\bgamma$ the corresponding subvector of
$\tilde\bgamma$ having $R-1$ components. We now solve
  \begin{align}\label{eq:equivsimp3}
\min_{\tilde\bgamma} \Vert \cR  -\cG\bgamma\Vert^2 + \lambda_A
    \Vert\tilde\bgamma\Vert^2
  \end{align}
with no constraints, and the solution is $\hat\bgamma=(\cG^\tran\cG+\lambda_A I_{R-1})^{-1}\cG^\tran\cR$.
The fit is given by $\cG\hat\bgamma=\cG(\cG^\tran\cG+\lambda_A
I_{R-1})^{-1}\cG^\tran\cR=\tilde \cS_A\cR$, and $\tilde \cS_A$ is
clearly a symmetric operator.

To obtain the simplified expression for $\hat\bsa$, we write
\begin{align}
\cG\hat\gamma&=\cz_AQ(Q^\tran\cz_A^\tran\cz_A Q+\lambda_A
                 I_{R-1})^{-1}Q^\tran
                 \cz_A^\tran\cR\nonumber\\
 &=\cz_AQ(Q^\tran D Q+\lambda_A
                 I_{R-1})^{-1}Q^\tran
                 \cR^+\label{eq:tosimplify}\\
             &=\cz_A\hat\bsa,\nonumber
\end{align}
with $D=\diag(\nid)$.
We write  $H=Q(Q^\tran D Q+\lambda_A I_{R-1})^{-1}Q^\tran$
and $\tilde
Q=(D+\lambda_A I_R)^{\frac12}Q$, and let
\begin{align}
  \tilde H&= (D+\lambda_A I_R)^{\frac12} H (D+\lambda_A
              I_R)^{\frac12}
= \tilde Q(\tilde Q^\tran\tilde Q)^{-1}\tilde
    Q^\tran.\label{eq:Qproj}
\end{align}
  Now (\ref{eq:Qproj}) is a projection matrix in $\real^R$ onto a
    $R-1$ dimensional subspace. Let $\tilde q = (D+\lambda_A
    I_R)^{-\frac12}\bone.$ Then $\tilde q^\tran \tilde Q={\bszero}$, and so
    $$\tilde H=I_R-\frac{\tilde q\tilde q^\tran}{\Vert \tilde
      q\Vert^2}.$$
    Unraveling this expression we get
    $$ H=(D+\lambda_AI_R)^{-1}
    -(D+\lambda_AI_R)^{-1}\frac{\bone\bone^\tran}{\bone^\tran(D+\lambda_AI_R)^{-1}\bone}(D+\lambda_AI_R)^{-1}.$$
    With $\hat\bsa=H\cR^+$ in (\ref{eq:tosimplify}), this gives the
    expressions for each $\hat a_i$ in~\eqref{eq:ahatsoln}.
Finally, $\tilde \cS_A = \cz_A H\cz_A^\tran$ is symmetric.
\end{proof}

\subsection{Covariance matrix for $\hat\beta_{\gls}$ with centered operators}
\label{sec:covar-matr-hatb}
In Section~\ref{sec:two-factors} we saw in (\ref{eq:covbhat}) that we
get a simple expression for
$\cov(\hat\beta_{\gls})$. This simplicity relies on the fact that
$I_N-\cS_G=\cw=\sse\cv^{-1}$, and the usual cancelation occurs when we
use the sandwich formula to compute this covariance.
When we backfit with our centered smoothers we get a modified residual
operator
$I_N-\widetilde \cS_G$ such that the analog of (\ref{eq:gfactor})
still gives us the required coefficient estimate:
\begin{equation}
  \label{eq:gfactorc}
\hat\beta_{\gls} = (\cx^\tran(I_N-\widetilde\cS_G)\cx)^{-1}\cx^\tran(I_N-\widetilde\cS_G)\cy.
\end{equation}
However, $I_N-\widetilde\cS_G\neq \sse\cv^{-1}$, and so now we need to
resort to the sandwich formula
$ \cov(\hat\beta_{\gls})=H_\gls \cv H_\gls^\tran$
with $H_\gls$ from \eqref{eq:gfactor}.
Expanding this we find that
$\cov(\hat\beta_{\gls})$ equals
\begin{align*}
 (\cx^\tran(I_N-\widetilde\cS_G)\cx)^{-1}\cx^\tran(I_N-\widetilde\cS_G)
  \cdot \cv\cdot (I_N-\widetilde\cS_G)\cx(\cx^\tran(I_N-\widetilde\cS_G)\cx)^{-1}.
\end{align*}

While this expression might appear daunting, the computations are simple.
Note first that while $\hat\beta_{\gls}$ can be computed via
$\tilde\cS_G\cx$ and $\tilde\cS_G\cy$ this expression for $\cov(\hat\beta_{\gls})$
also involves $\cx^\tran \tilde\cS_G$.  When we use the centered operator
from Theorem~\ref{thm:smartcenter} we get a symmetric matrix $\tilde \cS_G$.
 Let $\widetilde \cx=(I_N-\widetilde\cS_G)\cx$, the residual
matrix after backfitting each column of $\cx$ using these centered operators. Then because
$\widetilde\cS_G$ is symmetric, we have
\begin{align}
  \hat\beta_{\gls}&=(\cx^\tran\widetilde\cx)^{-1}\widetilde\cx^\tran\cy,\quad\text{and}   \notag\\
   \cov(\hat\beta_{\gls})&=(\cx^\tran\widetilde\cx)^{-1}\widetilde\cx^\tran\cdot\cv\cdot\widetilde\cx(\cx^\tran\widetilde\cx)^{-1}.\label{eq:covbhatgls}
\end{align}
Since $\cv=\sse\left(\cz_A\cz_A^\tran/\lambda_A+\cz_B\cz_B^\tran/\lambda_B
  +I_N\right)$ (two low-rank matrices plus the identity), we can
compute $\cv\cdot \widetilde\cx$ very efficiently, and hence also the
covariance matrix in~\eqref{eq:covbhatgls}.
The entire algorithm is summarized in Section~\ref{sec:wholeshebang}.

\section{Convergence of the matrix norm}\label{sec:normconvergence}

In this section we prove a bound on the norm of the matrix
that implements backfitting for our random effects $\bsa$ and $\bsb$
and show how this controls the number of iterations required.
In our algorithm, backfitting is applied to $\cy$ as well as to each non-intercept column of $\cx$
so we do not need to consider the updates for $\cx\hat\beta$.
It is useful to take account of intercept adjustments in backfitting,
by the centerings described in Section~\ref{sec:backfitting}
because the space spanned by  $a_1,\dots,a_R$
intersects the space spanned by $b_1,\dots,b_C$ because both
include an intercept column of ones.

In backfitting we alternate between adjusting $\bsa$ given $\bsb$ and
$\bsb$ given $\bsa$. At any iteration, the new $\bsa$ is an affine function of
the previous $\bsb$
and then the new $\bsb$ is an affine function of the new $\bsa$.
This makes the new $\bsb$ an affine function of the previous $\bsb$.
We will study that affine function to find conditions where
the updates converge. If the $\bsb$ updates converge, then so must the $\bsa$
updates.

Because the updates are affine they can be written in the form
$$
\bsb \gets M\bsb + \eta
$$
for $M\in\real^{C\times C}$ and $\eta\in\real^C$.
We iterate this update and
it is convenient to start with $\bsb = \bszero$.
We already know from \cite{buja:hast:tibs:1989} that this  backfitting
will converge.  However, we want more. We want to avoid
having the number of iterations required grow with $N$.
We can write the solution $\bsb$ as
$$
\bsb = \eta +\sum_{k=1}^\infty M^k\eta,
$$
and in computations we truncate this sum after $K$ steps
producing an error $\sum_{k>K}M^k\eta$.  We want
$\sup_{\eta\ne0}\Vert \sum_{k>K}M^k\eta\Vert/\Vert\eta\Vert<\epsilon$
to hold with probability tending to one as the sample size
increases for any $\epsilon$, given sufficiently large $K$.
For this it suffices to have the spectral
radius $\lambda_{\max}(M)<1-\delta$ hold with probability
tending to one for some $\delta>0$.

Now for any $1\le p\le\infty$ we have
$$
\lambda_{\max}(M)\le \Vert M\Vert_{p}
\equiv \sup_{\bsx\in \real^C\setminus\{\bszero\}}
\frac{\Vert M\bsx\Vert_p}{\Vert \bsx\Vert_p}.
$$
The explicit formula
$$
\Vert M\Vert_{1}
\equiv \sup_{\bsx\in \real^C\setminus\{\bszero\}}
\frac{\Vert M\bsx\Vert_1}{\Vert \bsx\Vert_1}
= \max_{1\le s\le C}\sum_{j=1}^C | M_{js}|
$$
makes the matrix $L_1$  matrix norm very tractable theoretically
and so that is the one we study.  We look at this and some
other measures numerically in Section~\ref{sec:empiricalnorms}.




\subsection{Updates}
Recall that $Z\in\{0,1\}^{R\times C}$ describes the pattern of observations.
In a model with no intercept, centering the responses and
then taking shrunken means as in \eqref{eq:backfit} would yield
these updates
\begin{align*}
a_i &\gets \frac{\sum_s \zis(\yis-\bs)}{\nid+\lambda_A}\quad\text{and}\quad
b_j \gets \frac{\sum_i \zij(\yij-\ai)}{\ndj+\lambda_B}.
\end{align*}
The update from the old $\bsb$ to the new $\bsa$ and
then to the new $\bsb$
takes the form $\bsb\gets M\bsb+\eta$ for
$M=\mzero$ where
$$
\mzero_{js} =
\frac1{\ndj+\lambda_B}\sum_i \frac{\zis\zij}{\nid+\lambda_A}.$$
This update $\mzero$ alternates shrinkage estimates for $\bsa$
and $\bsb$ but does no centering.
We don't exhibit $\eta$ because it does not affect the
convergence speed.

In the presence of an intercept, we know that $\sum_ia_i=0$
should hold at the solution and we can impose this simply
and very directly by centering the $a_i$, taking
\begin{align*}
a_i &\gets \frac{\sum_s \zis(\yis-\bs)}{\nid+\lambda_A}
-\frac1R\sum_{r=1}^R\frac{\sum_s \zrs(\yrs-\bs)}{\nrd+\lambda_A},
\quad\text{and}\\
b_j &\gets \frac{\sum_i \zij(\yij-\ai)}{\ndj+\lambda_B}.
\end{align*}
The intercept estimate will then be $\hat\beta_0=(1/C)\sum_jb_j$ which
we can subtract from $b_j$ upon convergence.
This iteration has the update matrix $\mone$ with
\begin{align}\label{eq:monejs}
\mone_{js}
&=\frac1{\ndj+\lambda_B}\sum_r
\frac{\zrs(\zrj-\ndj/R)}{\nrd+\lambda_A}
\end{align}
after replacing a sum over $i$ by an equivalent one over $r$.

In practice, we prefer to use the weighted centering from
Section~\ref{sec:centered-operators} to center the $a_i$
because it provides a symmetric smoother $\tilde\cS_G$
that supports computation of $\wh\cov(\hat\beta_{\gls})$.
While it is more complicated to analyze it is easily computable
and it satisfies the optimality condition in Theorem~\ref{thm:smartcenter}.
The algorithm is for a generic response $\cR\in\real^N$ such as $\cy$
or a column of $\cx$.
Let us illustrate it for the case $\cR=\cy$.
We begin with vector of $N$ values $\yij-\bj$
and so $Y^+_i = \sum_s\zis(\yis-\bs).$
Then
$w_i = (\nid+\lambda_A)^{-1}/\sum_r(\nrd+\lambda_A)^{-1}$
and the updated $a_r$ is
\begin{align*}
\frac{Y^+_r-\sum_iw_i Y^+_i}{\nrd+\lambda_A}
&=
\frac{\sum_s\zrs(\yrs-\bs)-\sum_iw_i
\sum_s\zis(\yis-\bs)}{\nrd+\lambda_A}.
\end{align*}
Using shrunken averages of $\yij-\ai$, the new $\bj$ are
\begin{align*}
\bj &=\frac1{\ndj+\lambda_B}\sum_r\zrj
\biggl(\yrj-
\frac{\sum_s\zrs(\yrs-\bs)-\sum_iw_i
\sum_s\zis(\yis-\bs)}{\nrd+\lambda_A}
\biggr).
\end{align*}
Now $\bsb \gets M\bsb+\eta$ for $M=\mtwo$, where
\begin{align}\label{eq:mtwojs}
\mtwo_{js}
&=\frac1{\ndj+\lambda_B}\sum_r
\frac{\zrj}{\nrd+\lambda_A}
\biggl(\zrs - \frac{\sum_{i}\frac{\zis}{\nid+\lambda_{A}}}{\sum_i{\frac{1}{\nid+\lambda_{A}}}}\biggr).
\end{align}

Our preferred algorithm applies the optimal update
from Theorem~\ref{thm:smartcenter}
to both $\bsa$ and $\bsb$ updates.  With that choice we do
not need to decide beforehand which random effects to center
and which to leave uncentered to contain the intercept.
We call the corresponding matrix $\mthree$.
Our theory below analyzes $\Vert\mone\Vert_1$
and $\Vert\mtwo\Vert_1$
which have simpler expressions than
$\Vert\mthree\Vert_1$.

Update $\mzero$ uses symmetric smoothers
for $A$ and $B$. Both are shrunken
averages.  The naive centering update $\mone$ uses
a non-symmetric smoother
$\cz_A(I_R-\bone_R\bone_R^\tran/R)(\cz_A^\tran\cz_A+\lambda_AI_R)^{-1}\cz_A^\tran$
on the $\ai$ with a symmetric smoother on $\bj$
and hence it does not generally produce a symmetric
smoother needed for  efficient computation
of $\wh\cov(\hat\beta_{\gls})$.
The update $\mtwo$ uses two symmetric
smoothers, one optimal and one a simple shrunken mean.
The update $\mthree$ takes the optimal
smoother for both $A$ and $B$.
Thus both $\mtwo$ and $\mthree$
support efficient computation of $\wh\cov(\hat\beta_{\gls})$.
A subtle point is that these symmetric smoothers are
matrices in $\real^{N\times N}$ while the matrices $M^{(k)}\in\real^{C\times C}$
are not symmetric.

\subsection{Model for $\zij$}\label{sec:modelz}
We will state conditions on $\zij$ under which
both $\Vert \mone\Vert_1$ and $\Vert \mtwo\Vert_1$
are bounded
below $1$ with probability tending to one, as the problem size grows.
We need the following exponential inequalities.

\begin{lemma}\label{lem:hoeff}
If $X\sim\dbin(n,p)$, then for any $t\ge0$,
\begin{align*}
\Pr( X\ge np+t ) &\le \exp( -2t^2/n ),\quad\text{and}\\
\Pr( X\le np-t ) &\le \exp( -2t^2/n ).
\end{align*}
\end{lemma}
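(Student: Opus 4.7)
The plan is to recognize this as the standard Hoeffding bound applied to a Binomial, and derive it via the Chernoff method. Write $X = \sum_{i=1}^n X_i$ where the $X_i$ are i.i.d.\ Bernoulli$(p)$ random variables taking values in $[0,1]$. The upper-tail statement will then follow from three classical ingredients: Markov's inequality applied to $e^{sX}$, a moment-generating-function bound for bounded random variables, and a one-dimensional optimization over $s$.

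First I would apply the Chernoff bound: for any $s>0$,
\[
\Pr(X \ge np + t) \le e^{-s(np+t)}\,\e[e^{sX}] = e^{-s(np+t)}\prod_{i=1}^n \e[e^{sX_i}].
\]
Second, I would invoke Hoeffding's lemma, which says that for a random variable $Y$ bounded in $[a,b]$ with mean $\mu$, one has $\e[e^{s(Y-\mu)}] \le \exp(s^2(b-a)^2/8)$. Applied with $a=0$, $b=1$ and $\mu = p$, this gives $\e[e^{sX_i}] \le \exp(sp + s^2/8)$, and multiplying over the $n$ independent coordinates yields $\e[e^{sX}] \le \exp(snp + ns^2/8)$. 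Substituting back,
\[
\Pr(X \ge np+t) \le \exp(-st + ns^2/8).
\]
Third, I would optimize the right-hand side over $s>0$: setting the derivative to zero gives $s = 4t/n$, which produces the exponent $-2t^2/n$, exactly the claimed bound. The lower-tail inequality follows symmetrically, either by applying the same argument to $-X$ (equivalently, to $n-X \sim \dbin(n,1-p)$), or by repeating the Chernoff argument with $s<0$.

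There is no real obstacle here: the only nontrivial ingredient is Hoeffding's lemma itself, which is standard and can simply be cited. The only judgement call is whether to state Hoeffding's lemma as a black box or to sketch its proof (typically a convexity argument showing that the cumulant generating function $\psi(s) = \log \e[e^{s(Y-\mu)}]$ satisfies $\psi(0)=\psi'(0)=0$ and $\psi''(s)\le (b-a)^2/4$, then integrating twice). Given the supporting role this lemma plays in the larger paper, citing a standard reference for Hoeffding's inequality for $[0,1]$-bounded sums and stating the specialization to $\dbin(n,p)$ should suffice.
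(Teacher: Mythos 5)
Your proposal is correct and matches the paper, which simply states that the lemma ``follows from Hoeffding's theorem'' --- exactly the result you derive via the standard Chernoff/Hoeffding's-lemma argument. Your closing remark that citing Hoeffding's inequality for $[0,1]$-bounded sums would suffice is precisely what the authors do.
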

\begin{proof}
This follows from Hoeffding's theorem.
\end{proof}

\begin{lemma}\label{lem:binounionbound}
Let $X_i\sim\dbin(n,p)$ for $i=1,\dots,m$, not necessarily independent.
Then for any $t\ge0$,
\begin{align*}
\Pr\Bigl( \max_{1\le i\le m} X_{i} \ge np+t \Bigr) &\le m\exp( -2t^2/n ) ,\quad\text{and}\\
\Pr\Bigl( \min_{1\le i\le m} X_{i} \le np-t \Bigr) &\le m\exp( -2t^2/n ).
\end{align*}
\end{lemma}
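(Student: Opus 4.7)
The plan is to recognize this as an immediate consequence of Lemma \ref{lem:hoeff} combined with Boole's inequality (the union bound). Crucially, independence of the $X_i$ is not required for a union bound, which is precisely why the lemma is stated without any joint distributional assumption on the $X_i$ beyond their marginal $\dbin(n,p)$ distributions.

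For the maximum, I would first rewrite the event as a union:
\begin{equation*}
\Bigl\{ \max_{1\le i\le m} X_{i} \ge np+t \Bigr\} = \bigcup_{i=1}^m \{X_i \ge np + t\}.
\end{equation*}
Applying Boole's inequality and then invoking the first bound of Lemma \ref{lem:hoeff} on each of the $m$ marginal events would give
\begin{equation*}
\Pr\Bigl(\max_i X_i \ge np+t\Bigr) \le \sum_{i=1}^m \Pr(X_i \ge np+t) \le \sum_{i=1}^m \exp(-2t^2/n) = m\exp(-2t^2/n),
\end{equation*}
which is the claimed inequality.

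For the minimum, the argument is entirely symmetric: I would rewrite $\{\min_i X_i \le np - t\} = \bigcup_i \{X_i \le np - t\}$ and apply the second bound of Lemma \ref{lem:hoeff} termwise, again summing with Boole's inequality.

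There is no real obstacle here. The only subtlety worth flagging in the write-up is the observation that the lack of independence among the $X_i$ is harmless because Boole's inequality holds for arbitrary events, while Lemma \ref{lem:hoeff} depends only on each marginal distribution. This is presumably why the authors state the lemma in this generality: in the intended application, the row or column counts of the observation matrix $Z$ are naturally dependent, but each has a binomial marginal, so this two-line union-bound lemma is exactly the tool needed.
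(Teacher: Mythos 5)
Your proposal is correct and matches the paper's own proof, which simply cites the union bound applied to Lemma~\ref{lem:hoeff}; you have merely written out the details the authors left implicit. Your remark that dependence among the $X_i$ is harmless because Boole's inequality needs no independence is exactly the point of stating the lemma this way.
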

\begin{proof}
  This is from the union bound applied
  to Lemma~\ref{lem:hoeff}.
\end{proof}

Here is our sampling model.
We index the size of our problem by $S\to\infty$.
The sample size $N$ will satisfy $\e(N)\ge S$.
The number of rows and columns in the data set are
$$R = S^\rho\quad\text{and}\quad C=S^\kappa$$
respectively, for positive numbers $\rho$ and $\kappa$.
Because our application domain has $N\ll RC$, we
assume that $\rho+\kappa>1$.
We ignore that $R$ and $C$ above are not necessarily integers.

In our model, $\zij\sim\dbern(\pij)$ independently with
\begin{align}\label{eq:defab}
\frac{S}{RC} \le \pij \le \pup\frac{S}{RC}
\quad\text{for}\quad 1\le\pup<\infty.
\end{align}
That is $1\le \pij S^{\rho+\kappa-1}\le\pup$.
Letting $\pij$ depend on $i$ and $j$
allows the probability model to capture
stylistic preferences affecting the missingness
pattern in the ratings data.

\subsection{Bounds for row and column size}
Letting $X \stocleq Y$ mean that $X$ is stochastically smaller than $Y$, we know that
\begin{align*}
\dbin(R, S^{1-\rho-\kappa}) &\stocleq \ndj \stocleq \dbin( R, \pup S^{1-\rho-\kappa}),\quad\text{and}\\
\dbin(C,S^{1-\rho-\kappa}) &\stocleq \nid \stocleq \dbin( C, \pup S^{1-\rho-\kappa}).
\end{align*}
By Lemma  \ref{lem:hoeff}, if $t\ge0$, then
\begin{align*}
\Pr( \nid \ge S^{1-\rho}(\pup+t))
&\le \Pr\bigl( \dbin(C,\pup S^{1-\rho-\kappa}) \ge S^{1-\rho}(\pup+t)\bigr)\\
&\le \exp(-2(S^{1-\rho}t)^2/C)\\
&= \exp(-2S^{2-\kappa-2\rho}t^2).
\end{align*}
Therefore if $2\rho+\kappa<2$, we find using
using Lemma~\ref{lem:binounionbound} that
\begin{align*}
 &\Pr\bigl( \max_i\nid \ge S^{1-\rho}(\pup+\epsilon)\bigr)
\le S^\rho\exp(-2S^{2-\kappa-2\rho}\epsilon^2)\to0
\end{align*}
for any $\epsilon>0$.
Combining this with an analogous lower bound,
\begin{align}\label{eq:boundnid}
\lim_{S\to\infty}\Pr\bigl( (1-\epsilon) S^{1-\rho}\le \min_i \nid \le \max_i \nid \le (\pup+\epsilon) S^{1-\rho}\bigr)=1.
\end{align}
Likewise, if $\rho+2\kappa<2$, then for any $\epsilon>0$,
\begin{align}\label{eq:boundndj}
\lim_{S\to\infty}\Pr\bigl( (1-\epsilon)S^{1-\kappa}\le \min_j \ndj \le \max_j \ndj \le (\pup+\epsilon) S^{1-\kappa}\bigr)=1.
\end{align}

\subsection{Interval arithmetic}
We will replace $\nid$ and other quantities
by intervals that asymptotically contain them with probability one and then use interval arithmetic in order to streamline
some of the steps in our proofs.
For instance,
$$\nid\in [(1-\epsilon)S^{1-\rho},(\pup+\epsilon)S^{1-\rho}]
= [1-\epsilon,\pup+\epsilon]\times S^{1-\rho}
= [1-\epsilon,\pup+\epsilon]\times \frac{S}{R}$$
holds simultaneously for all $1\le i\le R$ with probability
tending to one as $S\to\infty$.
In interval arithmetic,
$$[A,B]+[a,b]=[a+A,b+B]\quad\text{and}\quad [A,B]-[a,b]=[A-b,B-a].$$
If $0<a\le b<\infty$ and $0<A\le B<\infty$, then
$$[A,B]\times[a,b] = [Aa,Bb]\quad\text{and}\quad [A,B]/[a,b] = [A/b,B/a].$$
Similarly, if $a<0<b$ and $X\in[a,b]$, then
$|X|\in[0,\max(|a|,|b|)]$.
Our arithmetic operations on intervals yield
new intervals guaranteed to contain the results
obtained using any members of the original intervals.
We do not necessarily use the smallest such interval.

\subsection{Co-observation}

Recall that the co-observation matrices are $Z^\tran Z\in\{0,1\}^{C\times C}$
and $ZZ^\tran\in\{0,1\}^{R\times R}$.
If $s\ne j$, then
$$
\dbin\Bigl( R,\frac{S^2}{R^2C^2}\Bigr)
\stocleq \ztz_{sj}\stocleq
\dbin\Bigl( R,\frac{\pup^2S^2}{R^2C^2}\Bigr).
$$
That is
$\dbin(S^\rho, S^{2-2\rho-2\kappa})
\stocleq
\ztz_{sj}
\stocleq
\dbin(S^\rho, \pup^2S^{2-2\rho-2\kappa}).
$
For $t\ge0$,
\begin{align*}
\Pr\Bigl( \max_s\max_{j\ne s}\ztz_{sj}\ge (\pup^2+t)S^{2-\rho-2\kappa}\Bigr)
&\le \frac{C^2}2\exp( -(tS^{2-\rho-2\kappa})^2/R)\\
&= \frac{C^2}2\exp( -t^2 S^{4-3\rho-4\kappa}).
\end{align*}
If $3\rho+4\kappa<4$
then
\begin{align*}
&\Pr\Bigl( \max_s\max_{j\ne s} \,\ztz_{sj} \ge (\pup^2+\epsilon)S^{2-\rho-2\kappa}\Bigr)\to0,
\quad\text{and}\\
&\Pr\Bigl( \min_s\min_{j\ne s} \,\ztz_{sj} \le (1-\epsilon)S^{2-\rho-2\kappa}\Bigr)\to0,
\end{align*}
for any $\epsilon>0$.

\subsection{Asymptotic bounds for $\Vert M\Vert_1$}
Here we prove upper bounds for $\Vert M^{(k)}\Vert_1$ for $k=1,2$
of equations~\eqref{eq:monejs} and~\eqref{eq:mtwojs}, respectively.
The bounds depend on $\pup$ and there are values of $\pup>1$
for which  these norms are bounded strictly below one,
with probability tending to one.

\begin{theorem}\label{thm:m1norm1}
Let $\zij$ follow the model from Section~\ref{sec:modelz}
with $\rho,\kappa\in(0,1)$, that satisfy $\rho+\kappa>1$,  
$2\rho+\kappa<2$ and $3\rho+4\kappa<4$.
Then for any $\epsilon>0$,
\begin{align}\label{eq:claim1}
&
\Pr\bigl( \Vert \mone \Vert_1\le
\pup^2-\pup^{-2}+\epsilon
\bigr)\to1
,\quad\text{and}\\
&\Pr\bigl( \Vert \mtwo\Vert_1\le
\pup^2-\pup^{-2}+\epsilon\bigr)\to1 \label{eq:claim2}
\end{align}
as $S\to\infty$.
\end{theorem}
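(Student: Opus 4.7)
The plan is to work on a single high-probability event on which the row-count, column-count, and co-observation bounds of the preceding subsections all hold simultaneously, and then to bound $\|M^{(k)}\|_1$ by pure interval arithmetic. Fix $\epsilon>0$ and let $E_\epsilon$ be the event that $\nid\in[1-\epsilon,\pup+\epsilon]\,S^{1-\rho}$ for every $i$, $\ndj\in[1-\epsilon,\pup+\epsilon]\,S^{1-\kappa}$ for every $j$, and $\ztz_{sj}\in[1-\epsilon,\pup^2+\epsilon]\,S^{2-\rho-2\kappa}$ for every $s\ne j$. Combining \eqref{eq:boundnid}, \eqref{eq:boundndj}, and the co-observation estimate just above, $\Pr(E_\epsilon)\to 1$, and this is where all three hypotheses $2\rho+\kappa<2$, $\rho+2\kappa<2$, and $3\rho+4\kappa<4$ are consumed. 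Everything that follows is carried out deterministically on $E_\epsilon$.

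For \eqref{eq:claim1}, fix $s$ and examine $\sum_j |\mone_{js}|$. Split $\mone_{js}=P_{js}-Q_{js}$ with
\[
P_{js}=\frac{1}{\ndj+\lambda_B}\sum_r\frac{\zrs\zrj}{\nrd+\lambda_A},\qquad
Q_{js}=\frac{\ndj/R}{\ndj+\lambda_B}\sum_r\frac{\zrs}{\nrd+\lambda_A},
\]
both non-negative. On $E_\epsilon$, the bound $\ztz_{sj}\in[1,\pup^2]S^{2-\rho-2\kappa}$ combined with the row/column-count intervals puts $P_{js}$ into $[\pup^{-2},\pup^2]S^{-\kappa}$ up to a factor $1+O(\epsilon)$. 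The same row/column-count intervals, together with $\ndj/(\ndj+\lambda_B)=1-o(1)$, place $Q_{js}$ into $[\pup^{-1},\pup]S^{-\kappa}$, a strict sub-interval of $[\pup^{-2},\pup^2]S^{-\kappa}$. Because two non-negative quantities lying in a common interval differ by at most the length of that interval, $|\mone_{js}|\le(\pup^2-\pup^{-2}+O(\epsilon))S^{-\kappa}$ for $j\ne s$. The diagonal term $|\mone_{ss}|$ is $O(S^{\rho-1})$ (its prefactor $1-\nds/R$ is $o(1)$ because $\rho+\kappa>1$) and is therefore negligible. Summing over the $C=S^\kappa$ values of $j$ gives $\sum_j|\mone_{js}|\le\pup^2-\pup^{-2}+O(\epsilon)$ uniformly in $s$, yielding~\eqref{eq:claim1}.

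For \eqref{eq:claim2} I would run the same script on $\mtwo_{js}=P'_{js}-Q'_{js}$, where $P'_{js}$ coincides in form with $P_{js}$ and $Q'_{js}=cT_j/(\ndj+\lambda_B)$ with $T_j=\sum_r\zrj/(\nrd+\lambda_A)$ and $c=\sum_i\zis/(\nid+\lambda_A)\big/\sum_i 1/(\nid+\lambda_A)$. The main obstacle is bounding $c$: straightforward interval arithmetic only yields $c\in[\pup^{-1},\pup^2]S^{1-\rho-\kappa}$, which pushes $Q'_{js}$ into $[\pup^{-3},\pup^3]S^{-\kappa}$, too wide to be contained in the target interval $[\pup^{-2},\pup^2]S^{-\kappa}$. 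To close this gap I would argue that the weighted mean $c$ is in fact close to the unweighted proportion $\nds/R\in[1,\pup]S^{1-\rho-\kappa}$. The weights $w_i=(\nid+\lambda_A)^{-1}/\sum_r(\nrd+\lambda_A)^{-1}$ satisfy $w_i\in[\pup^{-1},\pup]/R$ and sum to $1$; because each $\zis$ perturbs $\nid$ by only one unit while $\nid\asymp S^{1-\rho}$ grows, a conditional Chebyshev bound on $c-\nds/R=\sum_i(w_i-1/R)\zis$ (with variance $O(R^{-1})=O(S^{-\rho})$) forces $c=(\nds/R)(1+o(1))$ under the hypothesis $\rho+2\kappa<2$. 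Once $c\in[1-\epsilon',\pup+\epsilon']S^{1-\rho-\kappa}$ on a further event of probability tending to $1$, $Q'_{js}$ lands in $[\pup^{-2},\pup^2]S^{-\kappa}$ and the rest of the argument proceeds exactly as for $\mone$, yielding~\eqref{eq:claim2}.
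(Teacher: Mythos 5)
Your treatment of \eqref{eq:claim1} is essentially the paper's argument: on the high-probability event where the row, column and co-observation counts all lie in their stated intervals, both the co-observation term and the centering term land in $[\pup^{-2}+O(\epsilon),\,\pup^{2}+O(\epsilon)]\,C^{-1}$, each off-diagonal entry is then bounded by the length of that interval, and the diagonal is negligible. Two small points there: the condition $\rho+2\kappa<2$ that you invoke for the $\ndj$ bounds is not among the theorem's hypotheses but does follow from $3\rho+4\kappa<4$ and $\rho>0$; and the prefactor $1-\nds/R$ of the diagonal term tends to $1$, not to $0$ --- the diagonal is negligible only because the remaining factor is $O(S^{\rho-1})$ with $\rho<1$.

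The proof of \eqref{eq:claim2} has a genuine gap, and it enters exactly where you leave the interval-arithmetic script. First, the obstacle you describe is not there. Writing $Q'_{js}=\bar Z_{\sumdot s}\cdot T_j/(\ndj+\lambda_B)$ and using the exact cancellation $\sum_r\zrj=\ndj\le \ndj+\lambda_B$ gives $T_j/(\ndj+\lambda_B)\in[\pup^{-1}-O(\epsilon),\,1]\,S^{\rho-1}$, and multiplying by $\bar Z_{\sumdot s}\in[\pup^{-1}-O(\epsilon),\,\pup^{2}+O(\epsilon)]\,S^{1-\rho-\kappa}$ already places $Q'_{js}$ in the target interval $[\pup^{-2},\pup^{2}]S^{-\kappa}$ up to $O(\epsilon)$; your $[\pup^{-3},\pup^{3}]$ arises from bounding $T_j$ and $(\ndj+\lambda_B)^{-1}$ by separate intervals, which discards that cancellation. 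This is precisely how the paper's proof proceeds at~\eqref{eq:zrsbarpart}. Second, the concentration claim you substitute is false under the stated model: with heterogeneous $\pij\in[1,\pup]S^{1-\rho-\kappa}$ the weights $w_i$ differ from $1/R$ systematically, not randomly, so $\bar Z_{\sumdot s}=\sum_i w_i\zis$ concentrates around $\sum_i w_i\,\e(\zis)$, which need not be $(\nds/R)(1+o(1))$. For instance, if half the rows have $\pij=\pup S/(RC)$ and half have $\pij=S/(RC)$, the ratio $\bar Z_{\sumdot s}/(\nds/R)$ converges to $4\pup/(1+\pup)^2<1$. (Even with homogeneous $\pij$, a Chebyshev bound with variance of order $S^{1-2\rho-\kappa}$ against a deviation of order $S^{1-\rho-\kappa}$, union-bounded over the $C=S^\kappa$ columns, would require $\kappa<1/2$ and so would miss part of the theorem's domain, e.g.\ $(\rho,\kappa)=(4/7,4/7)$.) The repair is simply to delete the detour and keep the cancellation; the resulting bound on $\bar Z_{\sumdot s}$ of $[\pup^{-1},\pup^{2}]S^{1-\rho-\kappa}$ is all that is needed.
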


\begin{figure}[t!]
  \centering
   \includegraphics[width=.8\hsize]{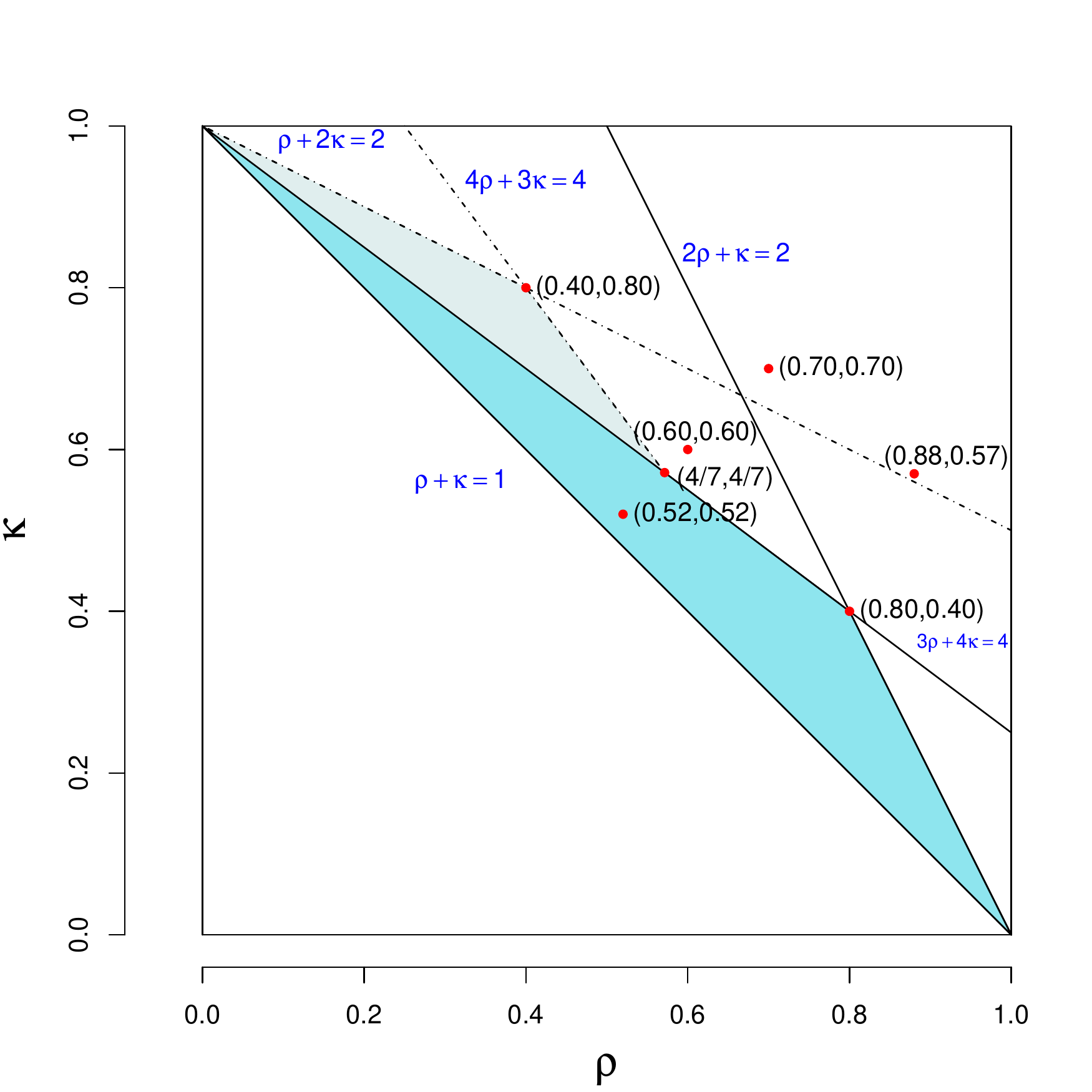}
    \caption{
\label{fig:domainofinterest}
The large shaded triangle is the domain of interest $\dom$ for
Theorem~\ref{thm:m1norm1}.
The smaller shaded triangle shows a region where the analogous update
to $\bsa$ would have acceptable norm.  The points marked are the ones we look at numerically,
including $(0.88,0.57)$ which corresponds to the Stitch Fix data in
Section~\ref{sec:stitch}.
}
\end{figure}

\begin{proof}
Without loss of generality we assume that $\epsilon<1$.
We begin with~\eqref{eq:claim2}.
Let $M=\mtwo$.
When $j\ne s$,
\begin{align*}
M_{js}&=\frac1{\ndj+\lambda_B}\sum_r
\frac{\zrj}{\nrd+\lambda_A}
(\zrs -\bar Z_{\sumdot s}),\quad\text{for}\\
\bar Z_{\sumdot s}&=
\sum_i
\frac{\zis}{\nid+\lambda_A}
\Bigm/
{\sum_{i}\frac{1}{\nid+\lambda_{A}}}.
\end{align*}
Although $|\zrs-\bar Z_{\sumdot s}|\le1$, replacing
$\zrs-\bar Z_{\sumdot s}$ by one does not prove to be
sharp enough for our purposes.

Every $\nrd+\lambda_A\in S^{1-\rho} [1-\epsilon, \pup+\epsilon]$
with probability tending to one and so
\begin{align*}
\frac{\bar Z_{\sumdot s}}{\ndj+\lambda_B}\sum_r
\frac{\zrj}{\nrd+\lambda_A}
&\in
\frac{\bar Z_{\sumdot s}}{\ndj+\lambda_B}\sum_r
\frac{\zrj}{[1-\epsilon,\pup+\epsilon]S^{1-\rho}}\\
&\subseteq [1-\epsilon,\pup+\epsilon]^{-1}\bar Z_{\sumdot s} S^{\rho-1}.
\end{align*}
Similarly
\begin{align*}
\bar Z_{\sumdot s} &\in
\frac{\sum_i\zis[1-\epsilon,\pup+\epsilon]^{-1}}
{R[1-\epsilon,\pup+\epsilon]^{-1}}
\subseteq\frac{\nds}{R}[1-\epsilon,\pup+\epsilon][1-\epsilon,\pup+\epsilon]^{-1}\\
&\subseteq S^{1-\rho-\kappa}
[1-\epsilon,\pup+\epsilon]^2[1-\epsilon,\pup+\epsilon]^{-1}
\end{align*}
and so
\begin{align}\label{eq:zrsbarpart}
\frac{\bar Z_{\sumdot s}}{\ndj+\lambda_B}\sum_r
\frac{\zrj}{\nrd+\lambda_A}
\in S^{-\kappa}
\frac{[1-\epsilon,\pup+\epsilon]^2}{[1-\epsilon,\pup+\epsilon]^2}
\subseteq \frac1C
\Bigl[
\Bigl(\frac{1-\epsilon}{\pup+\epsilon}\Bigr)^2
, \Bigl(\frac{\pup+\epsilon}{1-\epsilon}\Bigr)^2
\Bigr].
\end{align}
Next using bounds on the co-observation counts,
\begin{align}\label{eq:zrspart}
\frac1{\ndj+\lambda_B}\sum_r\frac{\zrj\zrs}{\nrd+\lambda_A}
\in \frac{S^{\rho+\kappa-2}\ztz_{sj}}{[1-\epsilon,\pup+\epsilon]^2}
\subseteq
\frac1C
\frac{[1-\epsilon,\pup^2+\epsilon]}{[1-\epsilon,\pup+\epsilon]^2}.
\end{align}
Combining~\eqref{eq:zrsbarpart} and~\eqref{eq:zrspart}
\begin{align*}
M_{js} \in &
\frac1C
\Bigl[
\frac{1-\epsilon}{(\pup+\epsilon)^2}-
\Bigl(\frac{\pup+\epsilon}{1-\epsilon}\Bigr)^2
,
\frac{\pup^2+\epsilon}{1-\epsilon}
-\Bigl(\frac{1-\epsilon}{\pup+\epsilon}\Bigr)^2
\Bigr]
\end{align*}
For any $\epsilon'>0$ we can choose $\epsilon$ small enough that
$$M_{js} \in C^{-1}[\pup^{-2}-\pup^2-\epsilon',
\pup^2-\pup^{-2}+{\epsilon'}]
$$
and then $|M_{js}|\le (\pup^2-\pup^{-2}+\epsilon')/C$.

Next, arguments like the preceding
give  $|M_{jj}|\le (1-\epsilon')^{-2}(\pup+\epsilon')S^{\rho-1}\to0$.
Then with probability tending to one,
$$
\sum_j|M_{js}|
\le\pup^2-\pup^{-2}
+2\epsilon'.
$$
This bound holds for all $s\in\{1,2,\dots,C\}$, establishing~\eqref{eq:claim2}.

The proof of~\eqref{eq:claim1} is similar.
The quantity $\bar Z_{\sumdot s}$
is replaced by $(1/R)\sum_i\zis/(\nid+\lambda_A)$.
\end{proof}

It is interesting to find the largest $\pup$ with
$\pup^2-\pup^{-2}\le1$. 
It is
$((1+5^{1/2})/2)^{1/2}\doteq 1.27$.

\section{Convergence and computation}\label{sec:empiricalnorms}

In this section we make some computations on synthetic data
following the probability model from Section~\ref{sec:normconvergence}.
First we study the norms of our update matrix $\mtwo$
which affects the number of iterations to convergence.
In addition to $\Vert\cdot\Vert_1$ covered in Theorem~\ref{thm:m1norm1}
we also consider $\Vert\cdot\Vert_2$, $\Vert\cdot\Vert_\infty$  and $\lambda_{\max}(\cdot)$.
Then we compare the cost to compute $\hat\beta_\gls$ by
our backfitting method with that of lmer \citep{lme4}.

The problem size is indexed by $S$.
Indices $i$ go from $1$ to $R=\lceil S^\rho\rceil$
and indices $j$ go from $1$ to $C=\lceil S^\kappa\rceil$.
Reasonable parameter values have $\rho,\kappa\in(0,1)$
with $\rho+\kappa>1$.
Theorem~\ref{thm:m1norm1} applies when
$2\rho+\kappa<2$ and $3\rho+4\kappa<4$.
Figure~\ref{fig:domainofinterest} depicts this
triangular domain of interest $\dom$.
There is another triangle $\dom'$ where a corresponding update
for $\bsa$ would satisfy the conditions of Theorem~\ref{thm:m1norm1}.
Then $\dom\cup\dom'$ is a non-convex polygon of five sides.
Figure~\ref{fig:domainofinterest}
also shows $\dom'\setminus\dom$ as a second triangular region.
For points $(\rho,\kappa)$ near the line $\rho+\kappa=1$, the matrix $Z$
will be mostly ones unless $S$ is very large.  For points $(\rho,\kappa)$
near the upper corner $(1,1)$, the matrix $Z$ will be extremely sparse
with each $\nid$ and $\ndj$ having nearly a
Poisson distribution with mean between $1$ and $\pup$.
The fraction of potential values that have been observed
is $O(S^{1-\rho-\kappa})$.

Given {$\pij$}, we generate our observation matrix via $\zij \simind\dbern({\pij)}$.
These probabilities are first generated via
${\pij}= U_{ij}S^{1-\rho-\kappa}$ where
$U_{ij}\simiid\dunif[1,\pup]$ and $\pup$ is
the largest value for which $\pup^2-\pup^{-2}\le1$.
For small $S$ and $\rho+\kappa$ near $1$ we can get
some values ${\pij>1}$ and in that case we take ${\pij=1}$.

The following $(\rho,\kappa)$ combinations are of interest.
First, $(4/5,2/5)$ is the closest vertex of the domain of interest to the point $(1,1)$.
Second, $(2/5,4/5)$ is outside the domain of interest for the $\bsb$
but within the domain for the analogous $\bsa$ update.
Third,  among points with $\rho=\kappa$, the value $(4/7,4/7)$
is the farthest one from the origin that is in the domain of interest.
We also look at some points on the $45$ degree line that are outside
the domain of interest because the sufficient conditions in
Theorem~\ref{thm:m1norm1}
might not be necessary.

In our matrix norm computations we took $\lambda_A=\lambda_B=0$.
This completely removes shrinkage and will make it harder for the algorithm to converge
than would be the case for the positive $\lambda_A$ and $\lambda_B$ that hold
in real data.  The values of $\lambda_A$ and $\lambda_B$
appear in expressions $\nid+\lambda_A$ and $\ndj+\lambda_B$ where their
contribution is asymptotically negligible, so conservatively setting them to zero
will nonetheless be realistic for large data sets.

\begin{figure}
  \centering
  \includegraphics[width=.8\hsize]{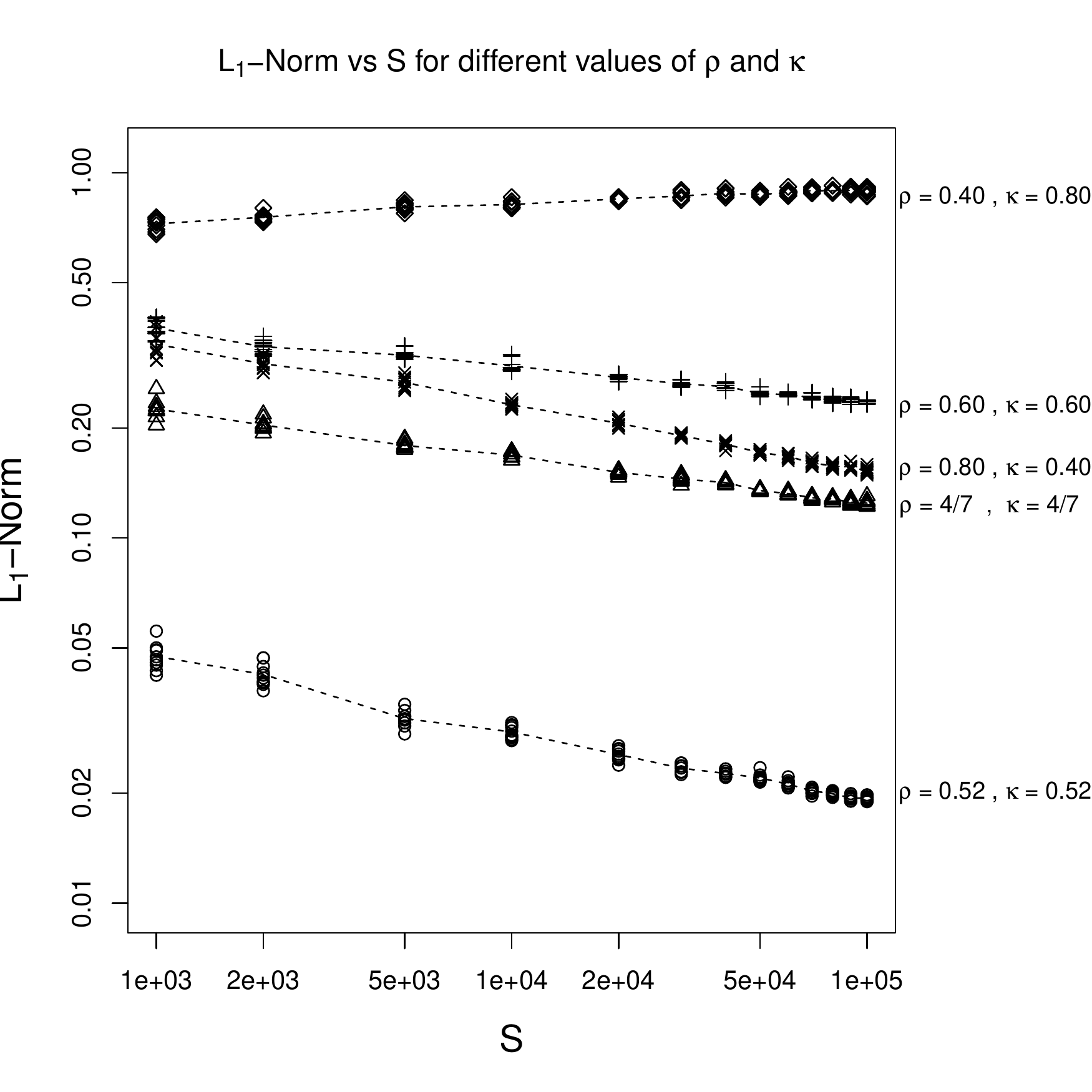}
\caption{\label{fig:1normvsn}
Norm
$\Vert M^{(2)}\Vert_1$ of centered update matrix
versus problem size $S$ for different $(\rho, \kappa)$.
}
\end{figure}
\noindent

We sample from the model multiple times at various values of $S$
and plot $\Vert \mtwo\Vert_1$ versus $S$ on a logarithmic scale.
Figure~\ref{fig:1normvsn} shows the results.
We observe that $\Vert \mtwo\Vert_1$ is below $1$ and decreasing
with $S$ for all the examples $(\rho,\kappa)\in\dom$.
This holds also for $(\rho,\kappa)=(0.60,0.60)\not\in\dom$.
We chose that point because it is on the convex hull of $\dom\cup\dom'$.

The point $(\rho,\kappa)=(0.40,0.80)\not\in\dom$.
Figure~\ref{fig:1normvsn} shows large values of $\Vert\mtwo\Vert_1$ for this
case. Those values increase with $S$, but remain below $1$ in the range considered.
This is a case where the update from $\bsa$ to $\bsa$ would have norm well below $1$
and decreasing with $S$, so backfitting would converge.
We do not know whether $\Vert\mtwo\Vert_1>1$ will occur for larger $S$.

The point $(\rho,\kappa)=(0.70,0.70)$ is not in the domain $\dom$
covered by Theorem~\ref{thm:m1norm1}
and we see that $\Vert\mtwo\Vert_1>1$ and generally increasing with $S$
as shown in Figure~\ref{fig:7070norms}.
This does not mean that backfitting must fail to converge.
Here we find  that $\Vert\mtwo\Vert_2<1$ and generally decreases as $S$
increases.  This is a strong indication that
the number of backfitting iterations required
will not grow with $S$ for this $(\rho,\kappa)$ combination.
We cannot tell whether $\Vert\mtwo\Vert_2$ will decrease to zero
but that is what appears to happen.

We consistently find in our computations
that $\lambda_{\max}(\mtwo)\le \Vert\mtwo\Vert_2\le\Vert\mtwo\Vert_1$.
The first of these inequalities must necessarily hold.
For a symmetric matrix $M$ we know that $\lambda_{\max}(M)=\Vert M\Vert_2$
which is then necessarily no larger than $\Vert M\Vert_1$.
Our update matrices are nearly symmetric but not perfectly so.
We believe that explains why their $L_2$ norms are
close to their spectral radius and also smaller than their $L_1$ norms.
While the $L_2$ norms are empirically more favorable than the $L_1$
norms, they are not amenable to our theoretical treatment.

\begin{figure}
  \centering
  \begin{subfigure}{.48\textwidth}
  \centering
  \includegraphics[scale=.4]{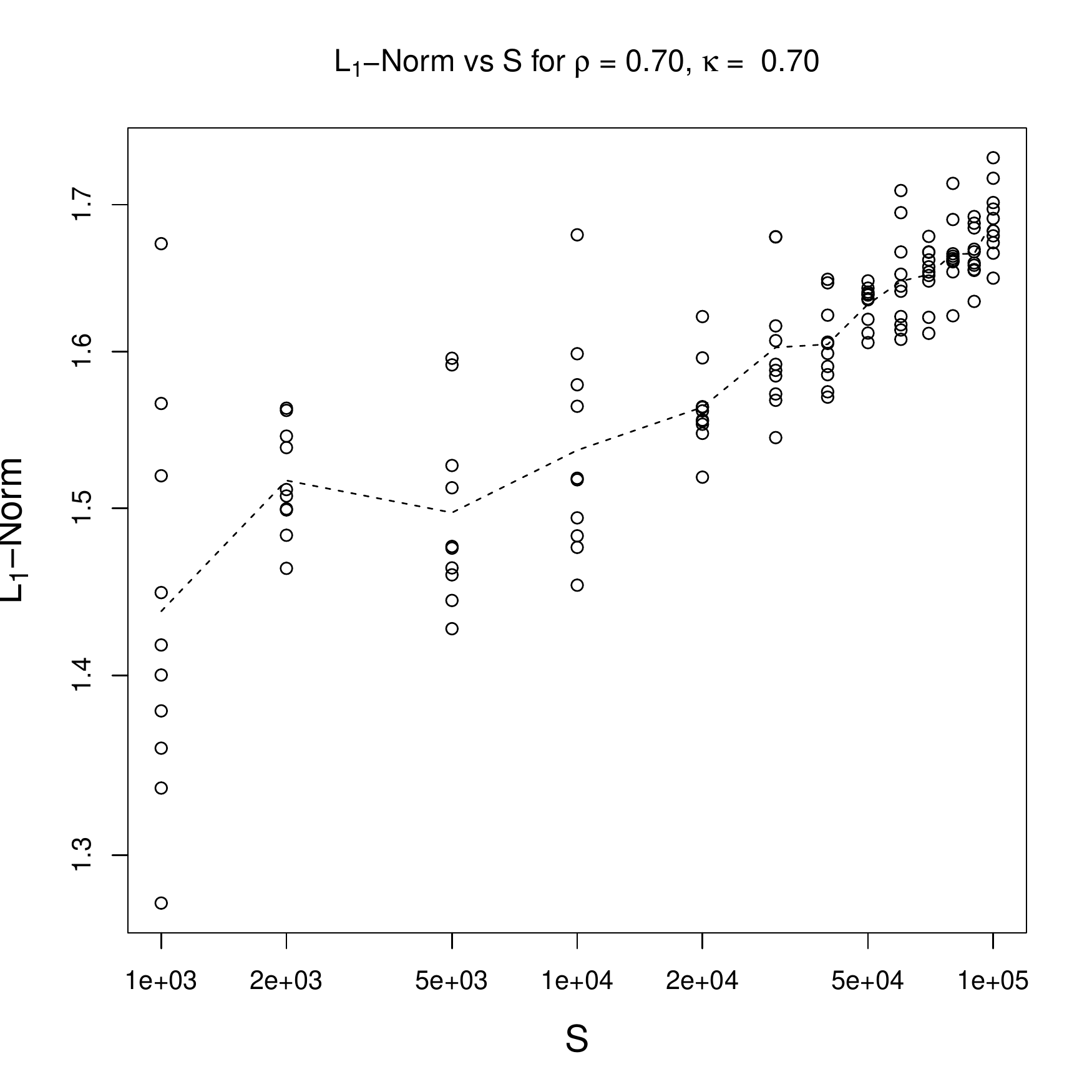}
 \end{subfigure}
\begin{subfigure}{.48\textwidth}
  \centering
\includegraphics[scale=.4]{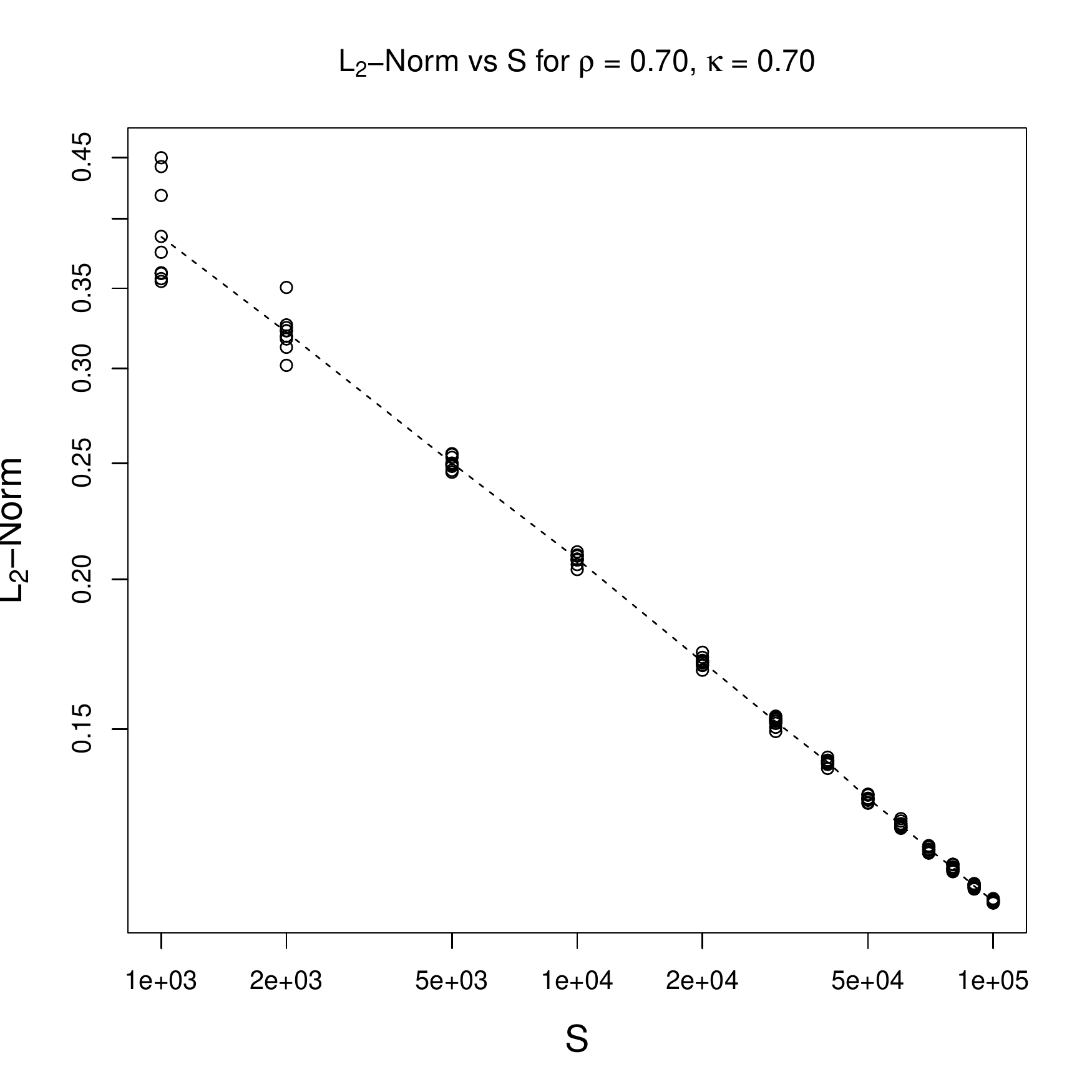}
\end{subfigure}
\caption{\label{fig:7070norms}
The left panel shows $\Vert\mtwo\Vert_1$ versus $S$.
The right panel shows $\Vert\mtwo\Vert_2$ versus $S$
with a logarithmic vertical scale.
Both have $(\rho,\kappa)=(0.7,0.7)$.
}
\end{figure}

We believe that backfitting will have a spectral radius well below $1$
for more cases than we can as yet prove.
In addition to the previous figures showing matrix norms
as $S$ increases for certain special values of $(\rho,\kappa)$ we
have computed contour maps of those norms over
$(\rho,\kappa)\in[0,1]$ for $S=10{,}000$.
See Figure~\ref{fig:contours}.

To compare the computation times for algorithms we
generated $\zij$ as above and also took
$\xij\simiid\dnorm(0,I_7)$ plus an intercept, making $p=8$
fixed effect parameters.
Although backfitting can run with $\lambda_A=\lambda_B=0$,
lmer cannot do so for numerical reasons.  So we took $\ssa=\ssb=1$
and $\sse=1$ corresponding to $\lambda_A=\lambda_B=1$.
The cost per iteration does not depend on $\yij$ and hence not
on $\beta$ either. We used $\beta=0$.

Figure~\ref{fig:comptimes} shows computation times
for one single iteration when $(\rho,\kappa)=(0.52,0.52)$ and when $(\rho,\kappa)=(0.70,0.70)$.
The time to do one iteration in lmer grows roughly like $N^{3/2}$
in the first case. For the second case, it appears to grow at
the  even faster rate of $N^{2.1}$.
Solving a system of $S^\kappa\times S^\kappa$ equations would cost
$S^{3\kappa} = S^{2.1} = O(N^{2.1})$, which explains the observed rate.
This analysis would predict $O(N^{1.56})$ for $\rho=\kappa=0.52$
but that is only minimally different from $O(N^{3/2})$.
These experiments were carried out in R on a computer
with the macOS operating system,  16 GB of memory and an Intel i7 processor. Each backfitting iteration entails solving \eqref{eq:backfit} along with the fixed effects.

The cost per iteration for backfitting follows closely to the  $O(N)$
rate predicted by the theory.
OLS only takes one iteration and it is also of
$O(N)$ cost.  In both of these cases $\Vert\mtwo\Vert_2$ is bounded away
from one so the number of backfitting iterations does not grow with $S$.
For $\rho=\kappa=0.52$,
backfitting took $4$ iterations to converge for the smaller values of $S$
and $3$ iterations for the larger ones.
For $\rho=\kappa=0.70$,
backfitting took $6$ iterations for smaller $S$ and $4$ or $5$ iterations
for larger $S$.
In each case our convergence criterion was a relative
change of $10^{-8}$
as described in Section~\ref{sec:wholeshebang}.
Further backfitting to compute BLUPs $\hat\bsa$ and $\hat\bsb$
given $\hat\beta_{\gls}$
took at most $5$ iterations for $\rho=\kappa=0.52$
and at most $10$ iterations for $\rho=\kappa=0.7$.
In the second example, lme4 did not reach convergence in
our time window so we ran it for just $4$ iterations to measure its cost per iteration.

\begin{figure}[!t]
  \centering
  \begin{subfigure}{.48\textwidth}
  \centering
  \includegraphics[scale=.28]{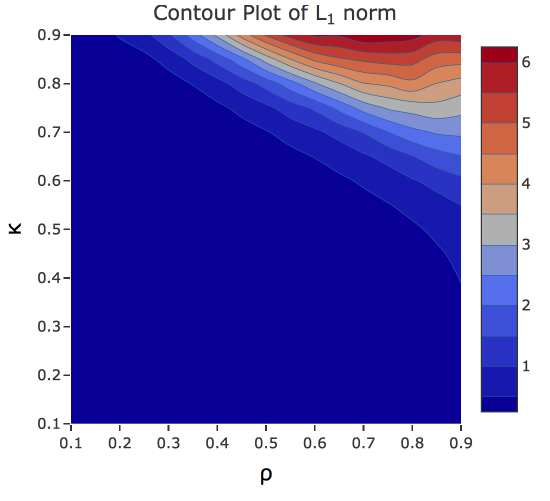}
 \end{subfigure}
\begin{subfigure}{.48\textwidth}
  \centering
\includegraphics[scale=.28]{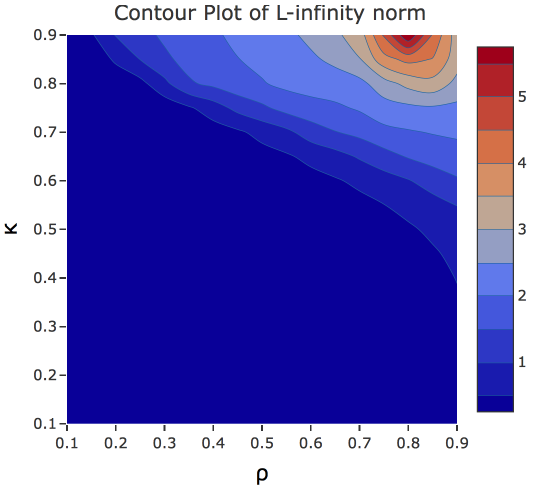}
\end{subfigure}
  \centering
  \begin{subfigure}{.48\textwidth}
  \centering
  \includegraphics[height = 5.2cm, width = 5.5cm]{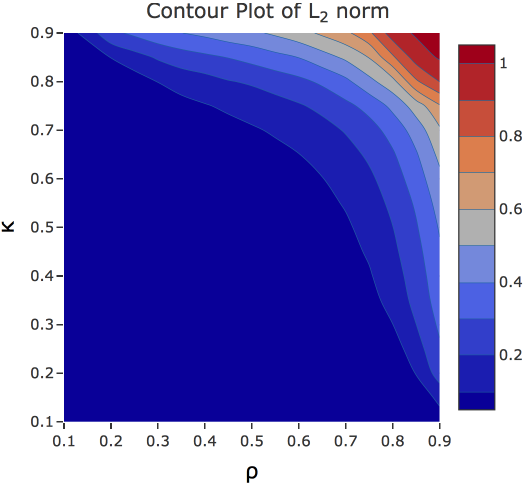}
 \end{subfigure}
\begin{subfigure}{.48\textwidth}
  \centering
\includegraphics[height = 5.2cm, width = 5.44cm]{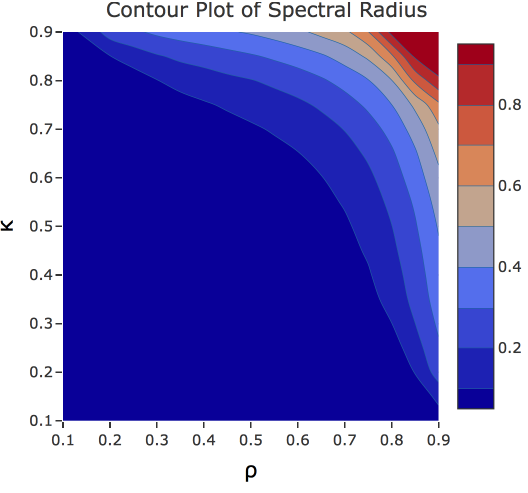}
\end{subfigure}
\caption{\label{fig:contours}
Numerically computed matrix norms
for $\mtwo$ using $S=10{,}000$.
The color code varies with the subfigures.
}
\end{figure}

\begin{figure}
\centering
   \begin{subfigure}{.48\textwidth}
  \centering
  \includegraphics[width=1\linewidth]{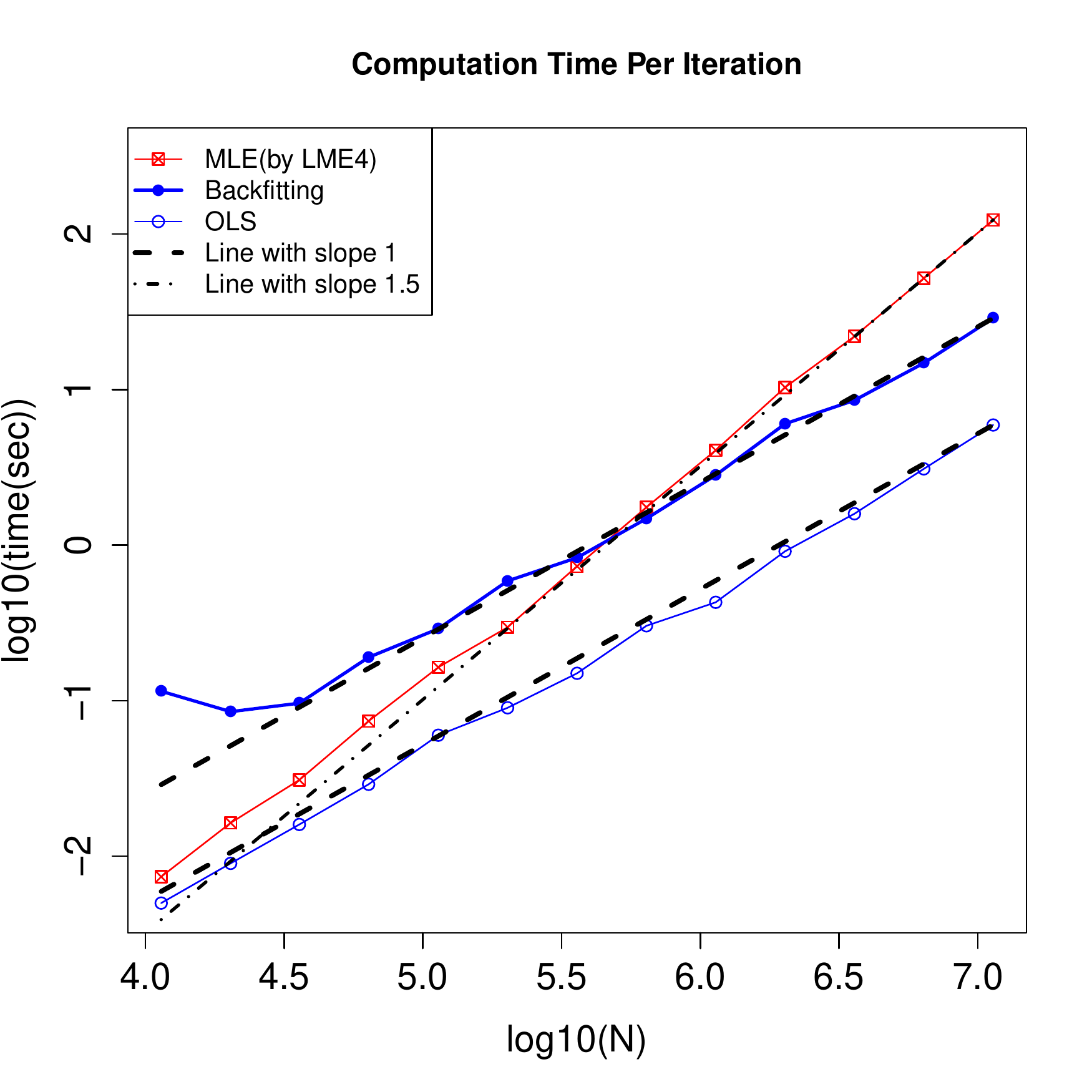}
      \caption{$(\rho, \kappa) = (0.52,0.52)$}
   \end{subfigure}
   
   \begin{subfigure}{.48\textwidth}
  \centering
\includegraphics[width=1\linewidth]{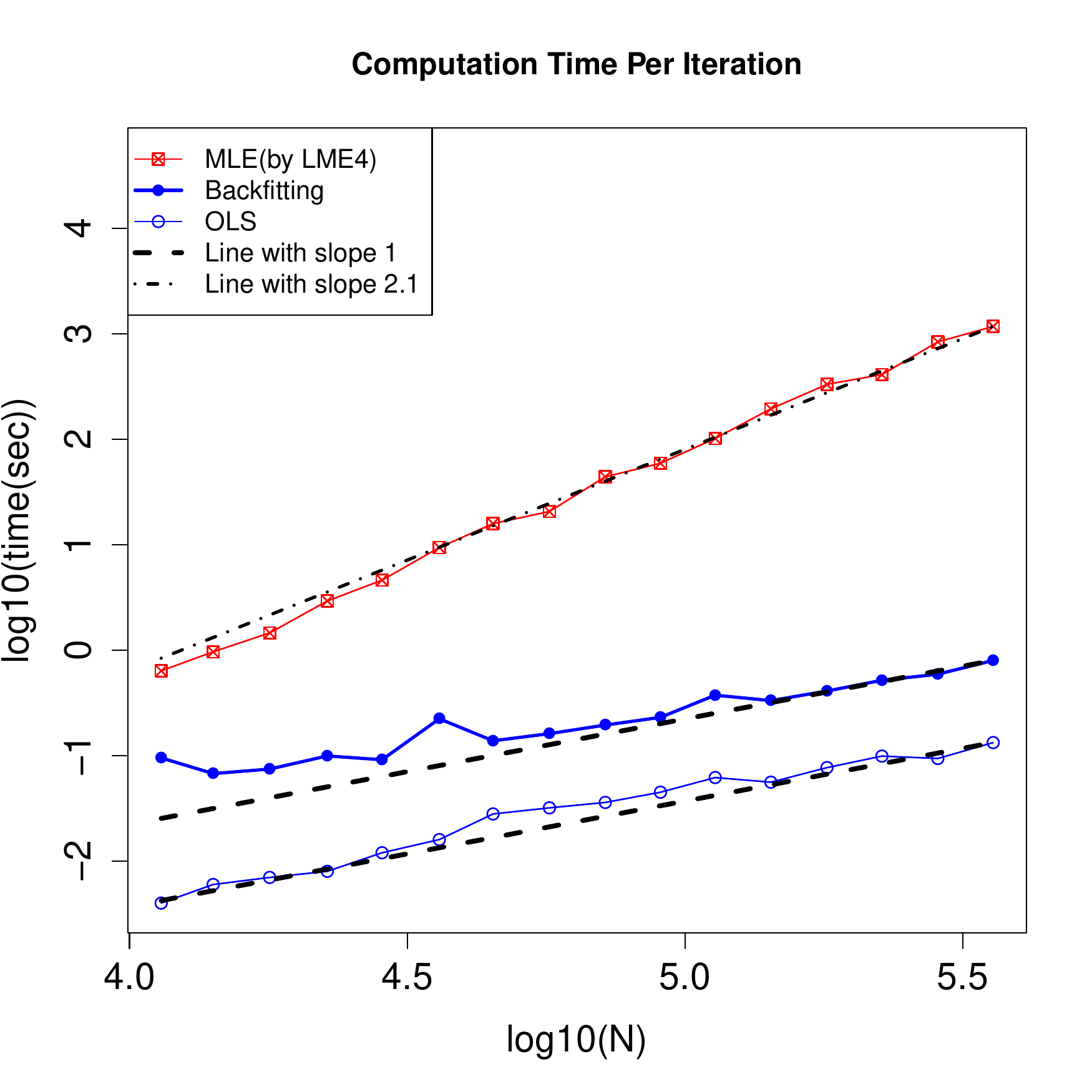}
      \caption{$(\rho, \kappa) = (0.70,0.70)$}
\end{subfigure}
\caption{\label{fig:comptimes}
Time for one iteration versus the number of observations, $N$ at two points $(\rho,\kappa)$.
The cost for lmer is roughly $O(N^{3/2})$ in the top panel
and $O(N^{2.1})$ in the bottom panel.  The costs for OLS and backfitting
are $O(N)$.
}
\end{figure}

\section{Example: ratings from Stitch Fix}\label{sec:stitch}

We illustrate backfitting for GLS on some data from Stitch Fix.
Stitch Fix sells  clothing. They mail their customers a sample of items.
The customers may keep and purchase any of those items that they
want, while returning the others.  It is valuable to predict
the extent to which a customer will like an item, not just whether they will purchase it.
Stitch Fix has provided us with some of their client ratings
data.  It was anonymized, void of personally identifying
information, and as a sample it does not reflect their
total numbers of clients or items at the time they
provided it.  It is also from 2015. While
it does not describe their current business, it is a valuable
data set for illustrative purposes.

The sample sizes for this  data are as follows.
We received $N=5{,}000{,}000$ ratings
by $R=762{,}752$ customers on $C=6{,}318$ items.
These values of $R$ and $C$ correspond to the point $(0.88,0.57)$  in Figure~\ref{fig:domainofinterest}.
Thus $C/N\doteq 0.00126$ and $R/N\doteq 0.153$.
The data are not dominated by a single row or column because
$\max_i\nid/R\doteq 9\times 10^{-6}$ and $\max_j\ndj/N\doteq 0.0143$.
The data are sparse because $N/(RC)\doteq 0.001$.

\subsection{An illustrative linear model}
The response $\yij$ is a rating on a ten point scale of
the satisfaction of customer $i$ with item $j$.
The data come with features about the clients and
items.  In a business setting one would fit and compare
possibly dozens of different regression models to understand the data.
Our purpose here is to study large scale GLS and compare
it to ordinary least squares (OLS) and so we use just one model, not necessarily
one that we would have settled on.
For that purpose we use the same model that was
used in \cite{crelin}. It is not chosen to make OLS look as bad as
possible.  Instead it is potentially the first model one might look at in
a data analysis.
For client $i$ and item $j$,
\begin{align}
\yij& =  \beta_0+\beta_1\mathrm{match}_{ij}+\beta_2\mathbb{I}\{\mathrm{client\  edgy}\}_i+\beta_3\mathbb{I}\{\mathrm{item\ edgy}\}_j \notag \\
&\phe +  \beta_4\mathbb{I}\{\mathrm{client\ edgy}\}_i*\mathbb{I}\{\mathrm{item\ edgy}\}_j+\beta_5\mathbb{I}\{\mathrm{client\ boho}\}_i \notag \\
&\phe + \beta_6\mathbb{I}\{\mathrm{item\ boho}\}_j+\beta_7\mathbb{I}\{\mathrm{client\  boho}\}_i*\mathbb{I}\{\mathrm{item\ boho}\}_j \notag \\
&\phe + \beta_8\mathrm{material}_{ij}+a_i+b_j+e_{ij}. \notag
\end{align}
Here $\mathrm{material}_{ij}$ is a categorical variable that is implemented via indicator variables for each type of material other than the baseline. Following \cite{crelin}, we chose ‘Polyester’, the most common material, as the baseline.
Some customers and some items were given the adjective `edgy' in the data set. Another adjective was `boho', short for `Bohemian'.
The variable match$_{ij}\in[0,1]$ is an estimate of the probability that the customer keeps the item, made before the item was sent.
The match score is a prediction from a baseline model and is not representative of all algorithms used at Stitch Fix.
All told, the model has $p=30$ parameters.

\subsection{Estimating the variance parameters}\label{sec:estim-vari-param}
We use the method of moments method from \cite{crelin}
to estimate $\theta^\tran=(\ssa, \ssb, \sse)$ in $O(N)$ computation.
That is in turn based on the method that
\cite{GO17} use in the intercept only model where
$\yij = \mu+\ai+\bj+\eij$.
For that model they set
\begin{align*}
U_{A} &= \sum_{i} \sum_{j} Z_{ij}
          \Bigl( Y_{ij}-\frac{1}{\nid}\sum_{j^{\prime}}\zijp
          Y_{ij^{\prime}}\Bigr)^{2}, \\
  U_{B} &= \sum_{j}\sum_{i}  Z_{ij}
            \Bigl(Y_{ij}-\frac{1}{\ndj}\sum_{i^{\prime}}\zipj
            Y_{i^{\prime}j}\Bigr)^{2}, \quad\text{and}\\
  U_{E} &= N\sum_{i j} Z_{i j}  \Bigl(Y_{i j}-\frac{1}{N}\sum_{i^{\prime} j^{\prime}}\zipjp Y_{i^{\prime} j^{\prime}}\Bigr)^{2}.
\end{align*}
These are, respectively, sums of within row sums of squares,
sums of within column sums of squares
and a scaled overall sum of squares.
Straightforward calculations
 show that
 \begin{align*}
   \mathbb{E}(U_{A})&=\bigl(\ssb+\sse\bigr)(N-R), \\
   \mathbb{E}(U_{B})&=\bigl(\ssa+\sse \bigr)(N-C), \quad\text{and}\\
   \mathbb{E}(U_{E})&=\ssa\Bigl(N^{2}-\sum_{i} \nid^{2}\Bigr)+\ssb\Bigl(N^{2}-\sum_{j} \ndj^{2}\Bigr)+\sse(N^{2}-N).
  \end{align*}
By matching moments, we can  estimate  $\theta$ by solving the $3 \times 3$ linear system
$$\begin{pmatrix}
0& N-R & N-R \\[.25ex]
N-C & 0 & N-C \\[.25ex]
N^{2}-\Sigma N_{i}^{2} & N^{2}-\Sigma N_{j}^{2} & N^{2}-N
\end{pmatrix}
\begin{pmatrix}
\ssa \\[.25ex] \ssb \\[.25ex] \sse\end{pmatrix}
=\begin{pmatrix}
U_{A}\\[.25ex] U_{B} \\[.25ex]  U_{E}\end{pmatrix}
$$
for $\theta$.

Following \cite{GO17} we note that
$\eta_{ij} =\yij-\xij^\tran\beta = \ai+\bj+\eij$
has the same parameter $\theta$ as $Y_{ij}$ have.
We then take a consistent estimate of $\beta$,
in this case $\hat\beta_{\ols}$ that \cite{GO17} show is consistent for $\beta$,
and define  $\hat\eta_{ij}  =\yij-\xij^\tran\hat\beta_\ols$.
We then estimate $\theta$ by the above method
after replacing $\yij$ by $\hat\eta_{ij}$.
For the Stitch Fix data we obtained
$\hat{\sigma}_{A}^{2} = 1.14$ (customers),
$\hat{\sigma}^{2}_{B} = 0.11$ (items) 
and $\hat{\sigma}^{2}_{E} = 4.47$. 

\subsection{Computing $\hat\beta_\gls$}\label{sec:wholeshebang}

The estimated coefficients $\hat\beta_\gls$ and their standard errors are presented in a table in the appendix.
Open-source R code at
\url{https://github.com/G28Sw/backfit_code}
does these computations.
Here is a concise description of the algorithm we used:
\begin{compactenum}[\quad 1)]
\item Compute $\hat\beta_\ols$ via \eqref{eq:bhatols}.
\item Get residuals $\hat\eta_{ij} =\yij -\xij^\tran\hat\beta_{\ols}$.
\item Compute $\ssah$, $\ssbh$ and $\sseh$ by the method of moments on $\hat\eta_{ij}$.
\item Compute $\wt\cx=(I_N-\wt\cS_G)\cx$ using doubly centered backfitting $\mthree$.
\item Compute $\hat\beta_{\gls}$ by~\eqref{eq:covbhatgls}.
\item If we want BLUPs  $\hat\bsa$ and $\hat\bsb$ backfit
$\cy -\cx\hat\beta_{\gls}$ to get them.
\item Compute $\wh\cov(\hat\beta_{\gls})$ by plugging
$\ssah$, $\ssbh$ and $\sseh$ into $\cv$ at~\eqref{eq:covbhatgls}.
\end{compactenum}
\smallskip

Stage $k$ of backfitting provides $(\tilde\cS_G\cx)^{(k)}$.
We iterate until
$$
\frac{\Vert (\tilde\cS_G\cx)^{(k+1)}-(\tilde\cS_G\cx)^{(k)}\Vert^2_F}{\Vert (\tilde\cS_G\cx)^{(k)}\Vert^2_F}
< \epsilon
$$
where $\Vert \cdot \Vert_F$ is the Frobenius norm
(root mean square of all elements).
Our numerical results use $\epsilon =10^{-8}$.

{
When we want $\wh\cov(\hat\beta_{\gls})$ then we need
to use a backfitting strategy with a symmetric smoother
$\tilde\cS_G$.  This holds for $\mzero$, $\mtwo$ and $\mthree$
but not $\mone$.
After computing $\hat\beta_{\gls}$ one can return to step 2,
form new residuals
 $\hat\eta_{ij} =\yij -\xij^\tran\hat\beta_{\gls}$
and continue through steps 3--7.
We have seen small differences from doing this.
}  

\subsection{Quantifying inefficiency and naivete of OLS}
In the introduction we mentioned two serious problems with the use of OLS on crossed
random effects data. The first is that OLS is naive about correlations in the
data and this can lead it to severely underestimate the variance of $\hat\beta$.
The second is that OLS is inefficient compared to GLS by the Gauss-Markov theorem.
Let $\hat\beta_\ols$ and $\hat\beta_\gls$ be the OLS and GLS
estimates of $\beta$, respectively. We can compute their
corresponding variance estimates
$\wh\cov_\ols(\hat\beta_\ols)$ and $\wh\cov_\gls(\hat\beta_\gls)$.
We can also find
$\wh\cov_\gls(\hat\beta_\ols)$, the variance under our GLS model of the
linear combination of $\yij$ values that  OLS uses.
This section explore them graphically.

We can quantify the naivete of OLS
via the ratios
$\wh\cov_{\gls}(\hat\beta_{\ols,j})/\wh\cov_{\ols}(\hat\beta_{\ols,j})$
for $j=1,\dots,p$.
Figure~\ref{fig:OLSisnaive} plots these values. They range from $ 1.75$
to $345.28$ and can be interpreted as factors by which OLS naively overestimates
its sample size.
The largest and second largest ratios are for material indicators
corresponding to `Modal' and `Tencel', respectively. These appear
to be two names for the same product with Tencel being a trademarked name
for Modal fibers (made from wood).
We can also identify the linear combination of $\hat\beta_\ols$
for which $\ols$ is most naive. We maximize
the ratio
$x^\tran\wh\cov_{\gls}(\hat\beta_{\ols})x/x^\tran\wh\cov_{\ols}(\hat\beta_{\ols})x$
over $x\ne0$.
The resulting maximal ratio is the largest eigenvalue of
$$\wh\cov_{\ols}(\hat\beta_{\ols}) ^{-1}
\wh\cov_{\gls}(\hat\beta_{\ols})$$
and it is about $361$ for the Stitch Fix data.

\begin{figure}
\centering
\includegraphics[width=.9\hsize]{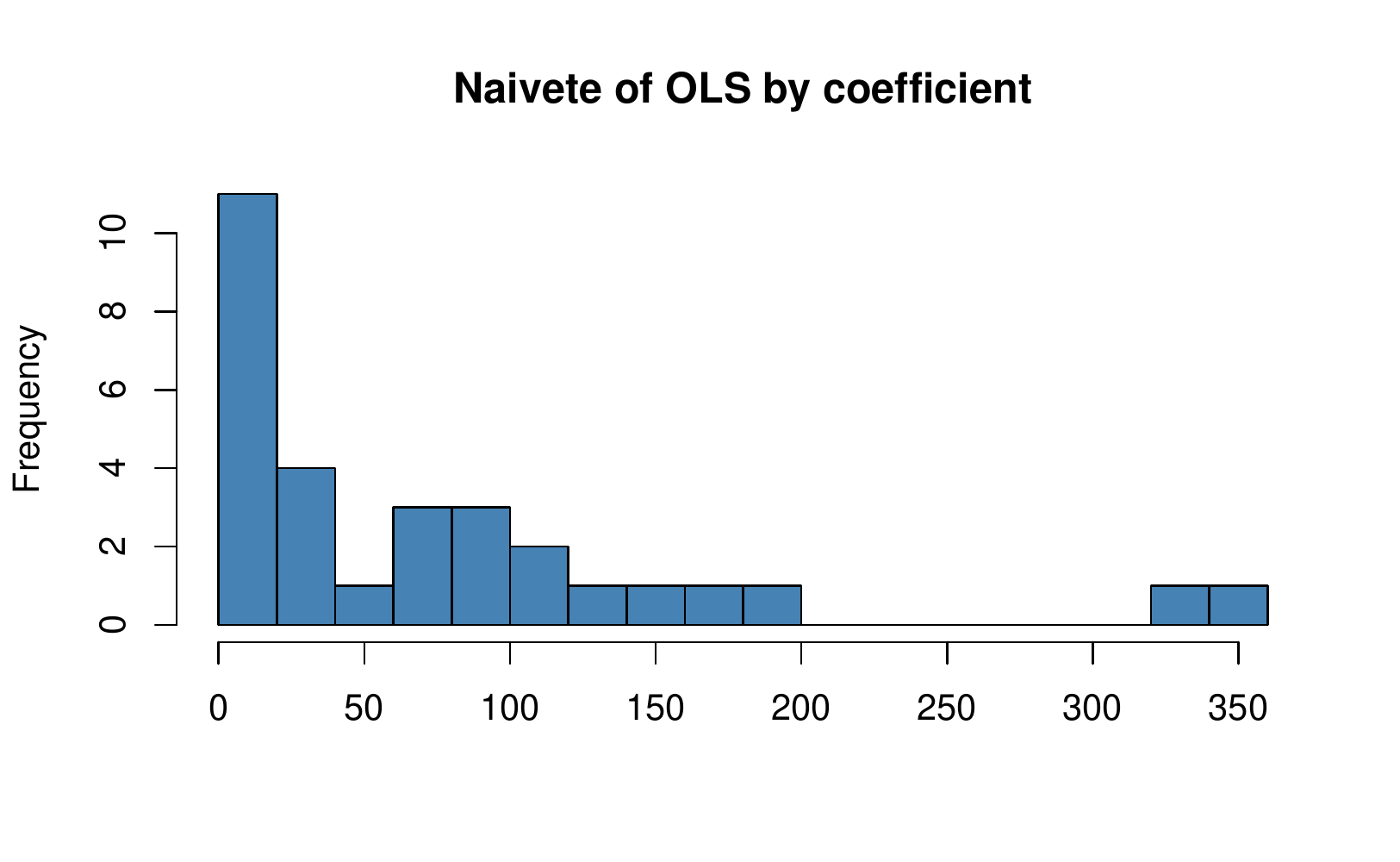}
\caption{\label{fig:OLSisnaive}
OLS naivete
$\wh\cov_{\gls}(\hat\beta_{\ols,j})/\wh\cov_{\ols}(\hat\beta_{\ols,j})$
for coefficients $\beta_j$ in the Stitch Fix data.
}
\end{figure}

We can quantify the inefficiency of OLS
via the ratio
$\wh\cov_{\gls}(\hat\beta_{\ols,j})/\wh\cov_{\gls}(\hat\beta_{\gls,j})$
for $j=1,\dots,p$.
Figure~\ref{fig:OLSisinefficient} plots these values. They range from just over $1$
to $50.6$ and can be interpreted as factors by which using
OLS reduces the effective sample size.  There is a clear outlier: the coefficient of the match
variable is very inefficiently estimated by OLS.  The second largest inefficiency
factor is for the intercept term.
The most inefficient linear combination of $\hat\beta$ reaches a
variance ratio of $52.6$, only slightly more inefficient than the match coefficient alone.

\begin{figure}
\centering
\includegraphics[width=.9\hsize]{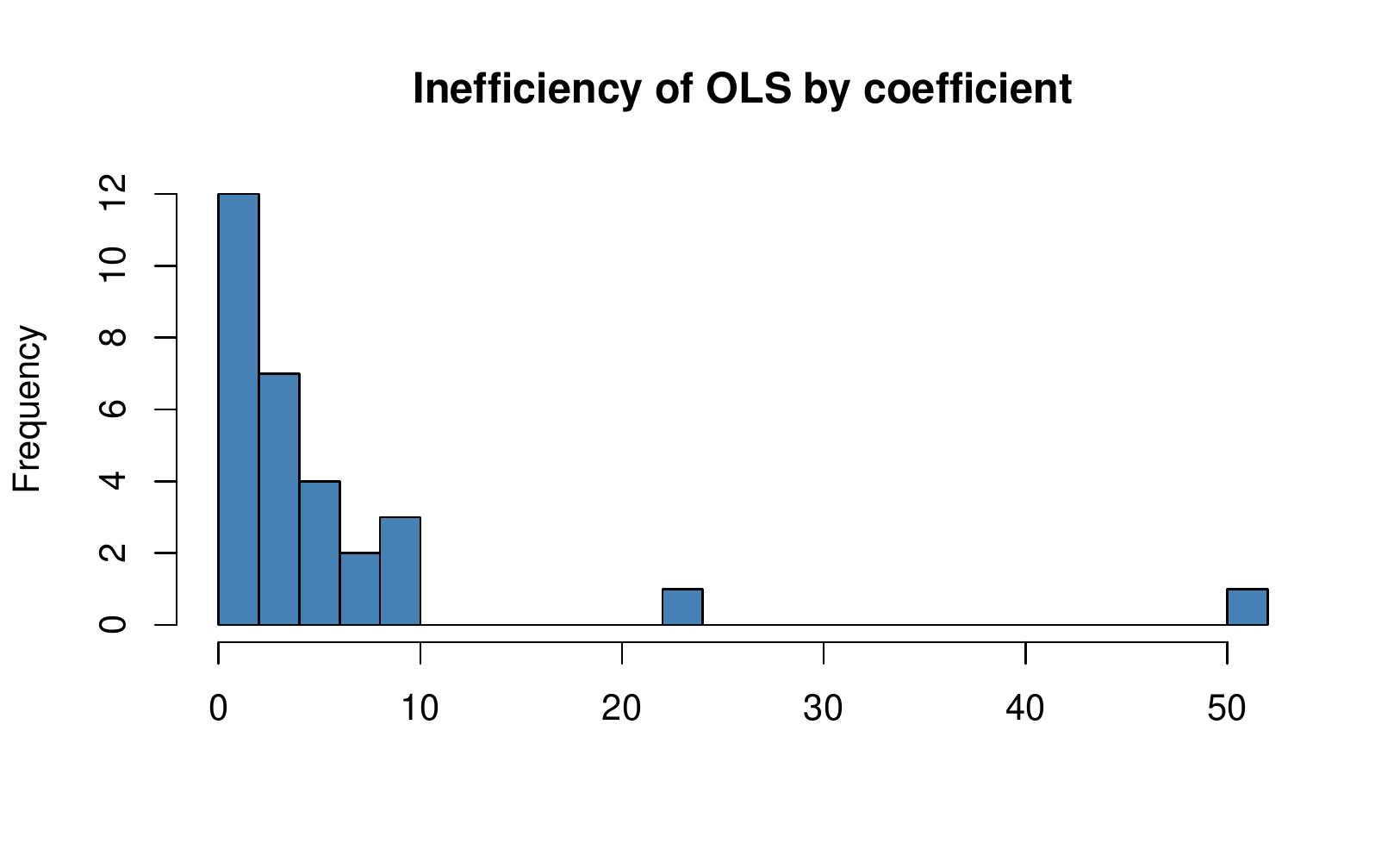}
\caption{\label{fig:OLSisinefficient}
OLS inefficiency
$\wh\cov_{\gls}(\hat\beta_{\ols,j})/\wh\cov_{\gls}(\hat\beta_{\gls,j})$
for coefficients $\beta_j$ in the Stitch Fix data.
}
\end{figure}

The variables for which OLS is more naive tend to also be the variables for
which it is most inefficient. Figure~\ref{fig:naivevsinefficient} plots these
quantities against each other for the $30$ coefficients in our model.

\begin{figure}[t]
\centering
\includegraphics[width=.8\hsize]{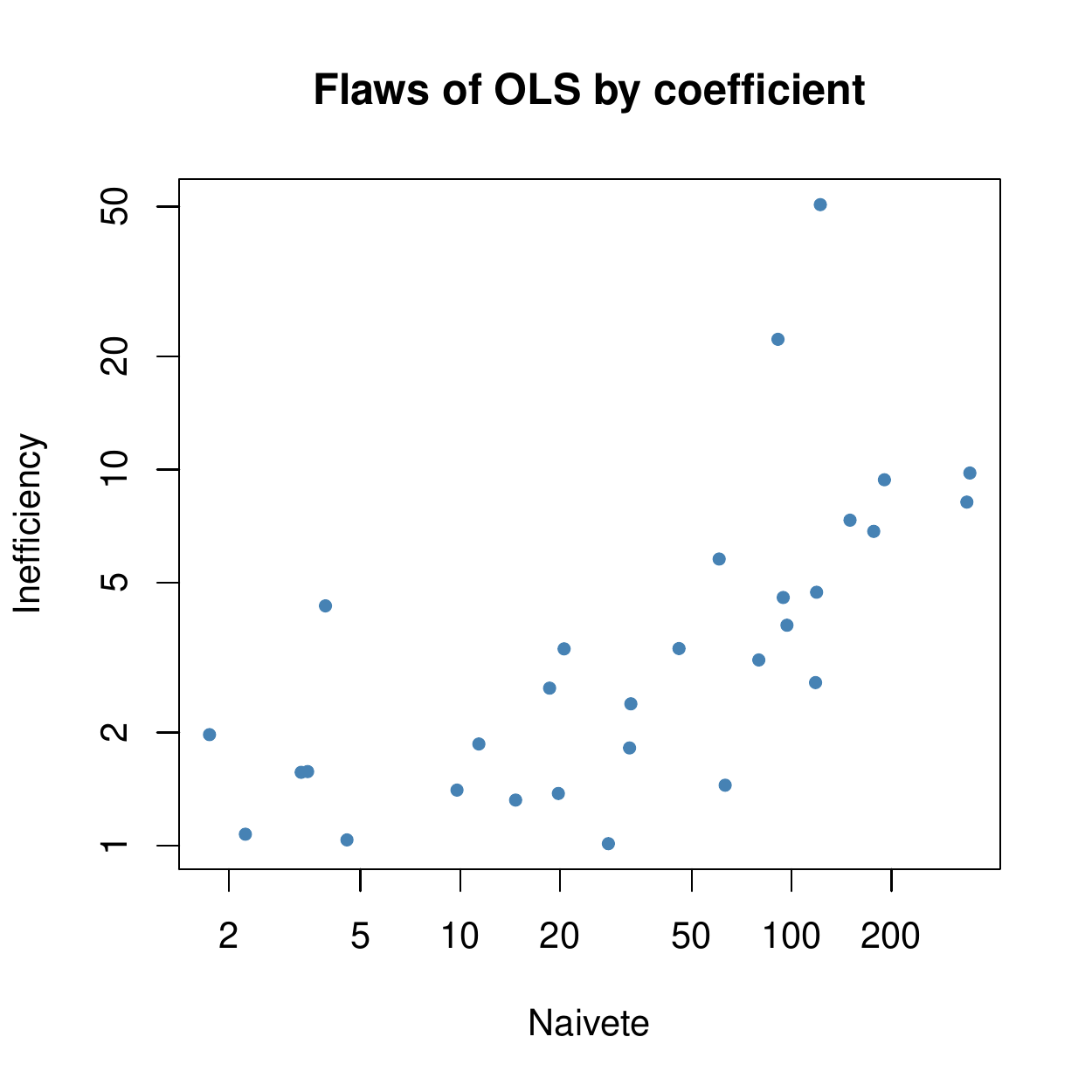}
\caption{\label{fig:naivevsinefficient}
Inefficiency vs naivete for OLS coefficients in the Stitch Fix data.
}
\end{figure}

\subsection{Convergence speed of backfitting}

The Stitch Fix data have row and column sample sizes
that are much more uneven than our sampling model for $Z$ allows.
Accordingly we cannot rely on Theorem~\ref{thm:m1norm1} to show that
backfitting must converge rapidly for it.

The sufficient conditions in that theorem may not be necessary
and we can compute
our norms and the spectral radius on
the update matrices for  the Stitch Fix data using some sparse matrix computations.
Here $Z\in\{0,1\}^{762,752\times6318}$,
so $M^{(k)}\in\real^{6318\times 6318}$ for $k \in \lbrace0,1,2,3\rbrace$.
The results are
$$
\begin{pmatrix}
\Vert \mzero\Vert_1 \ & \  \Vert \mzero\Vert_2 \ & \  |\lambda_{\max}(\mzero)|\\[.25ex]
\Vert \mone\Vert_1 \ & \  \Vert \mone\Vert_2 \ & \  |\lambda_{\max}(\mone)|\\[.25ex]

\Vert \mtwo\Vert_1 \ & \  \Vert \mtwo\Vert_2 \ & \  |\lambda_{\max}(\mtwo)|\\[.25ex]
\Vert \mthree\Vert_1 \ & \  \Vert \mthree\Vert_2 \ & \  |\lambda_{\max}(\mthree)|
\end{pmatrix}
=\begin{pmatrix}
31.9525 \ & \ 1.4051 \ & \  0.64027 \\[.75ex]
11.2191 \ & \  0.4512 \ & \  0.33386\\[.75ex]
\phz8.9178 \ & \  0.4541 \ & \  0.33407\\[.75ex]
\phz9.2143\ & \ 0.4546 & \ 0.33377\\
\end{pmatrix}.
$$
All the updates have spectral radius comfortably below one.
The centered updates have $L_2$ norm below one
but the uncentered update does not.
Their $L_2$ norms are somewhat larger than their spectral
radii because those matrices are not quite symmetric.
The two largest eigenvalue moduli for $\mzero$ are $0.6403$ and $0.3337$
and the centered updates have spectral radii close to the second
largest eigenvalue of $\mzero$.
This is consistent with an intuitive explanation that the space spanned
by a column of $N$ ones that is common to the columns spaces
of $\cz_A$ and $\cz_B$ is the {biggest impediment} to $\mzero$ and that
all three centering strategies essentially remove it.
The best spectral radius is for $\mthree$, which employs two principled
centerings, although in this data set it made little difference.
Our backfitting algorithm took $8$ iterations when applied to $\cx$
and $12$ more to compute the BLUPs.
We used a convergence threshold of $10^{-8}.$

\section{Discussion}\label{sec:discussion}

We have shown that the cost of our backfitting algorithm
is $O(N)$ under strict conditions that are nonetheless
much more general than having $\nid = N/C$
for all $i=1,\dots,R$ and  $\ndj = N/R$ for all $j=1,\dots,C$
as in \cite{papa:robe:zane:2020}.
As in their setting, the backfitting algorithm scales empirically to
much more general problems than those for which
rapid convergence can be proved.
Our contour map of the spectral radius of the update
matrix $M$ shows that this norm is well below $1$
over many more $(\rho,\kappa)$ pairs that our
theorem covers. The difficulty in extending our
approach to those settings is that the spectral radius
is a much more complicated function of the observation
matrix $Z$ than the $L_1$ norm is.

Theorem 4 of \cite{papa:robe:zane:2020}
has the rate of convergence for their collapsed Gibbs
sampler for balanced data.
It involves an auxilliary convergence rate $\rhoaux$
defined as follows.
Consider the Gibbs sampler on $(i,j)$ pairs where
given $i$ a random $j$ is chosen with probability $\zij/\nid$
and given $j$ a random $i$ is chosen with probability
$\zij/\ndj$.  That Markov chain has invariant distribution $\zij/N$
on $(i,j)$ pairs and $\rhoaux$ is the rate at which the chain converges.
In our notation
$$
\rhoprz = \frac{N\ssa}{N\ssa+R\sse}\times\frac{N\ssb}{N\ssb+C\sse}\times\rhoaux.
$$
In sparse data $\rhoprz\approx\rhoaux$ and under our asymptotic
setting $|\rhoaux-\rhoprz|\to0$.
\cite{papa:robe:zane:2020} remark that $\rhoaux$ tends to decrease
as the amount of data increases. When it does, then their algorithm
takes $O(1)$ iterations and costs $O(N)$.
They explain that $\rhoaux$ should decrease as the data set
grows because the auxiliary process then gets greater connectivity.
That connectivity increases for bounded $R$ and $C$ with increasing $N$
and from their notation, allowing multiple observations
per $(i,j)$ pair it seems like they have this sort of infill
asymptote in mind.
For sparse data from electronic commerce we think that
an asymptote like the one we study where $R$, $C$ and $N$
all grow is a better description.
It would be interesting to see how $\rhoaux$ develops under such a model.

In Section 5.3 \cite{papa:robe:zane:2020}
state that the convergence rate of the collapsed Gibbs sampler
is $O(1)$ regardless of the asymptotic regime.  That section is about
a more stringent `balanced cells' condition where every $(i,j)$ combination
is observed the same number of times, so it does not describe
the `balanced levels' setting where $\nid=N/R$ and $\ndj=N/C$.
Indeed they provide a counterexample in which there are two
disjoint communities of users and two disjoint sets of items
and each user in the first community has rated every item
in the first item set (and no others) while each user in the
second community has rated every item in the second item
set (and no others).  That configuration leads to an unbounded mixing time
for collapsed Gibbs.  It is also one where backfitting takes
an increasing number of iterations as the sample size grows.

There are interesting parallels between methods to sample a high
dimensional Gaussian distribution with covariance matrix $\Sigma$
and iterative solvers for the system $\Sigma \bsx = \bsb$.
See \cite{good:soka:1989} and \cite{RS97}
for more on how the convergence rates
for these two problems coincide.
We found that backfitting with one or both updates centered
worked much better than uncentered backfitting.
\cite{papa:robe:zane:2020} used a collapsed sampler
that analytically integrated out the global mean of their model in each update
of a block of random effects.

Our approach treats $\ssa$, $\ssb$ and $\sse$ as nuisance parameters.
We plug in a consistent method of moments based estimator of them
in order to focus on the backfitting iterations.
In Bayesian computations, maximum a posteriori estimators of
variance components under non-informative priors can be
problematic for hierarchical models \cite{gelm:2006},
and so perhaps maximum likelihood estimation of these
variance components would also have been challenging.

Whether one prefers a GLS estimate or a Bayesian one
depends on context and goals.   We believe that there is a strong
computational advantage to GLS for large data sets.
The cost of one backfitting iteration is comparable to the cost to generate
one more sample in the MCMC.  We may well find that only a dozen
or so iterations are required for convergence of the GLS.  A Bayesian
analysis requires a much larger number of draws from the posterior
distribution than that.
For instance, \cite{gelm:shir:2011} recommend an effective sample size of about $100$
posterior draws, with autocorrelations requiring a larger actual sample size.
\cite{vats:fleg:jone:2019} advocate even greater effective sample sizes.

It is usually reasonable to assume that there is a selection
bias underlying which data points are observed.
Accounting for any such selection bias must necessarily
involve using information or assumptions from outside the data set at
hand.  We expect that any approach to take proper account of
informative missingness must also make use of solutions to
GLS perhaps after reweighting the observations.
Before one develops any such methods, it is necessary
to first be able to solve GLS without regard to missingness.

Many of the problems in electronic commerce involve categorical outcomes,
especially binary ones, such as whether an item was purchased or not.
Generalized linear mixed models are then appropriate ways to handle
crossed random effects, and we expect that the progress made here
will be useful for those problems.

\section*{Acknowledgements}
This work was supported by the U.S.\ National Science Foundation under grant IIS-1837931.
We are grateful to Brad Klingenberg and Stitch Fix for sharing some test data with us.
We thank the reviewers for remarks that have helped us improve the paper.

\bibliographystyle{imsart-nameyear} 
\bibliography{bigdata}       

\vfill\eject
\section{Appendix}

Table~\ref{tab:Stitch_fix_regression} shows results of $\ols$ and $\gls$
regression for the Stitch Fix data in Section~\ref{sec:stitch}.
OLS is estimated to be naive when
$\wh{\mathrm{SE}}_{\ols}(\hat\beta_\ols)<\wh{\mathrm{SE}}_{\gls}(\hat\beta_\ols)$
and inefficient when
$\wh{\mathrm{SE}}_{\ols}(\hat\beta_\ols)>\wh{\mathrm{SE}}_{\gls}(\hat\beta_\gls)$.
Estimates that are more than double their corresponding standard error
get an asterisk.
{\color{red}

\newcolumntype{d}[1]{D{.}{.}{#1}}
\begin{table}[h!]
\centering
\caption{\label{tab:Stitch_fix_regression}
Stitch Fix
Regression Results}

\begin{tabular}{l d{3.7} d{3.7} d{3.7} d{3.7} d{3.7}}
\toprule
& \multicolumn{1}{c}{$\hat{\beta}_{\ols}$} & \multicolumn{1}{c}{$\widehat{\mathrm{SE}}_{\ols}(\hat{\beta}_{\ols})$} & \multicolumn{1}{c}{$\widehat{\mathrm{SE}}_{\gls}(\hat{\beta}_{\ols})$} & \multicolumn{1}{c}{$\hat{\beta}_{\gls}$} & \multicolumn{1}{c}{$\widehat{\mathrm{SE}}_{\gls}(\hat{\beta}_{\gls})$} \\
\midrule
Intercept& 4.635^{*} & 0.005397 & 0.05148 & 5.103^{*} & 0.01092\\
Match & 5.048^{*} & 0.01174 & 0.1297 & 3.442^{*} & 0.01823 \\
 $\mathbb{I}\{\text { client edgy }\}$ & 0.001020 & 0.002443 & 0.004444 & 0.003041 & 0.003550 \\
 $\mathbb{I}\{\text { item edgy }\}$ & -0.3358^{*} & 0.004253 & 0.03307 & -0.3515^{*} & 0.01375  \\
 $\mathbb{I}\{\text { client edgy }\}$ & & & \\
 $* \mathbb{I}\{\text { item edgy }\}$ & 0.3925^{*} & 0.006229 & 0.01233 & 0.3793^{*} & 0.005916\\
 $\mathbb{I}\{\text { client boho }\}$ & 0.1386^{*} & 0.002264 & 0.004211 & 0.1296^{*}& 0.003356\\
 $\mathbb{I}\{\text { item boho }\}$ & -0.5499^{*} & 0.005981 & 0.02713 &-0.6266^{*}& 0.01485  \\
 $\mathbb{I}\{\text { client boho }\}$ & & & \\
 $* \mathbb{I}\{\text { item boho }\}$ & 0.3822^{*} & 0.007566 &  0.01001 & 0.3763^{*} & 0.007123 \\
 Acrylic & -0.06482^{*} & 0.003778 & 0.03371 &-0.005360 & 0.01909                      \\
 Angora & -0.01262 & 0.007848 &  0.08530 &  0.07486 & 0.05177\\
 Bamboo & -0.04593 & 0.06215 & 0.2096 & 0.03251& 0.1535      \\
 Cashmere & -0.1955^{*} & 0.02484 & 0.1414& 0.008930   & 0.1048               \\
 Cotton & 0.1752^{*} & 0.003172 & 0.04220 & 0.1033^{*} &0.01612\\
 Cupro & 0.5979^{*} & 0.3016 &  0.4519&  0.2089 & 0.4363\\
 Faux Fur & 0.2759^{*} & 0.02008 & 0.07694 & 0.2749^{*} & 0.06691                 \\
 Fur & -0.2021^{*} & 0.03121 & 0.1388 & -0.07924 & 0.1182    \\
 Leather & 0.2677^{*} & 0.02482 & 0.07759 &0.1674^{*}&0.06545 \\
 Linen & -0.3844^{*} & 0.05632 & 0.2429 & -0.08658&0.1499 \\
 Modal & 0.002587 & 0.009775 & 0.1816&0.1388^{*} &0.05804                         \\
 Nylon & 0.03349^{*} & 0.01552 & 0.08878 &0.08174&0.05751\\
 Patent Leather & -0.2359 & 0.1800 & 0.3838 & -0.3764&0.3771   \\
 Pleather & 0.4163^{*} & 0.008916 & 0.08774 & 0.3292^{*}&0.04468 \\
 PU & 0.4160^{*} & 0.008225 & 0.07989 & 0.4579^{*}&0.03737\\
 PVC & 0.6574^{*} & 0.06545 & 0.3462 & 0.9688^{*}&0.3441 \\
 Rayon & -0.01109^{*} & 0.002951 & 0.04074 & 0.05155^{*}&0.01329  \\
 Silk & -0.1422^{*} & 0.01317 & 0.08907 & -0.1828^{*}&0.04871 \\
 Spandex & -0.3916^{*} & 0.00931 &0.1373  & 0.4140^{*} &0.1141  \\
 Tencel & 0.4966^{*} & 0.01729 & 0.1712&  0.1234^{*}&0.05982 \\
 Viscose & 0.04066^{*} & 0.006953 & 0.08519 & -0.02259&0.03145   \\
Wool & -0.06021^{*} & 0.006611 & 0.07211&-0.05883&0.03319\\
\bottomrule
\end{tabular}

\end{table}

\end{document}